\g@addto@macro{\@algocf@init}{\SetKwInOut{Parameter}{Parameters}} 
\newtheorem{proposition}{Proposition}
\newtheorem{lemma}{Lemma}
\begin{document}

\title{A Novel Hybrid Backscatter and Conventional Algorithm for Multi-Hop IoT Networks}

\author{Mahmoud Raeisi,
        Mehdi Mahdavi, 
        and Ali Mohammad Doost Hosseini
\thanks{M. Raeisi is MS.c. researcher in Department of Electrical and
Computer Engineering, Isfahan University of Technology, Isfahan 84156-83111, Iran, email: mraeic@gmail.com}
\thanks{M. Mahdavi is with the Department
of Electrical and Computer Engineering, Isfahan University of Technology, Isfahan, 8415683111, Iran e-mail: (see https://mahdavi.iut.ac.ir/).}
\thanks{AM. Doost Hosseini is with the Department
of Electrical and Computer Engineering, Isfahan University of Technology, Isfahan, 8415683111, Iran e-mail: alimdh@cc.iut.ac.ir.}
}

\maketitle

\begin{abstract}
This paper investigates a multi-hop cognitive radio network in terms of end-to-end bit delivery. The network exploits backscatter communication (BackCom) and harvest-then-transmit (HTT) mode in a hybrid manner. Such a network can be used in internet of things (IoT) applications in which IoT users coexist with a primary network (PN) and use the primary spectrum to transmit data in both BackCom and HTT modes. Besides, such users can harvest energy from the primary signals. A novel hybrid backscatter and conventional transmission (HBCT) algorithm is proposed in order to maximize end-to-end bit delivery by jointly optimizing time and power allocations. For this goal, we formulate a non-convex optimization problem. Next, we transform the problem into a convex one and develop a new analytical formulation by which we calculate the optimal power and time allocation in closed-form equations. The numerical results demonstrate the superiority of HBCT compared with current schemes.
\end{abstract}

\begin{IEEEkeywords}
Internet of things, backscatter communication, cognitive radio networks, energy harvesting, convex optimization.
\end{IEEEkeywords}

\IEEEpeerreviewmaketitle

\section{Introduction}

\IEEEPARstart{T}{he} emergence of the internet of things (IoT) will involve a wide variety of interconnected devices in the upcoming years. Monitoring and managing urban noises, street lighting, irrigation, and waste disposal can be done more effectively \cite{ngu2017iot}. Multi-hop communication could be an excellent way to improve the coverage area of IoT networks \cite{kiran2018modeling}. On the other hand, energy constraint is a critical issue in IoT applications because a remote device could be located in an area that battery replacement or recharge is impossible or difficult\cite{rosyidi2016energy}.

In recent years, energy harvesting has attracted significant attention in order to prolong the lifetime of remote devices \cite{yan2018outage, behdad2018new, jafari2019sum}. Energy can be harvested from different sources like wind, thermal, solar, and wireless radio-frequency (RF), but the efficiency of each method depends on its availability. RF signals are accessible in various environmental conditions; hence, it is appropriate to be utilized in IoT applications \cite{kamalinejad2015wireless}. Another crucial concern in IoT applications is the availability of frequency spectrum. Recently, cognitive radio networks (CRNs) have been proposed to solve this problem \cite{khan2017cognitive}. Cognitive radio can be adopted in various networks to enhance the efficiency of the spectrum and improve the performance of the system \cite{jafari2019sum, amini2017energy, xu2017end}.
By adopting energy harvesting in CRNs, we can simultaneously exploit the advantages of energy harvesting and cognitive radio. This kind of network is called energy harvesting CRNs (EH-CRNs) \cite{xu2017end}. In the EH-CRNs, all secondary users (SUs) are battery-free and could be equipped with a super-capacitor that can keep harvested energy temporarily due to its high rate of self-discharge characteristic \cite{sudevalayam2011energy}. Therefore, each SU can harvest energy just prior to the transmission mode. This method is known as harvest-then-transmit (HTT).

Backscatter communication (BackCom) is another promising technology that can be considered for IoT applications. A backscatter device transmits data by backscattering incident signals \cite{hoang2017ambient}.
Due to its lack of oscillators and analog-to-digital converters (ADCs), the circuit power consumption is significantly reduced \cite{lyu2018throughput}. Consequently, SUs can transmit data as long as incident signals are available. The incident signals could be ambient signals, e.g., TV, WiFi, or dedicated signals scattered by a carrier emitter.

This paper studies a multi-hop EH-CRN scheme that enjoys BackCom's advantages and obtains optimal resource allocation to maximize end-to-end bit delivery. A multi-hop network can be more suitable for IoT applications. In such a network, nodes work in short-range, consume less energy, and impose less interference to the other primary and secondary users.
To improve the performance of such a network, we develop an algorithm for multi-hop BackCom-EH-CRNs, namely hybrid backscatter and conventional transmission (HBCT). The proposed algorithm enables SUs to transmit conventional or backscatter data while harvesting energy. It is worth mentioning that in this paper, we refer HTT method as conventional communication. First, the problem is formulated as a non-convex optimization problem. Next, we prove Lemma 1, by which the problem is transformed into a convex one. We develop a new analytical formulation by which we solve the optimization problem and obtain the optimal resource allocations in closed-form equations. To evaluate the proposed algorithm, we design a typical scenario and compare the HBCT algorithm with joint optimal time and power allocation (JOTPA) \cite{xu2017end} and simple ambient backscatter (AB) algorithms to demonstrate the superiority of the proposed algorithm. 

The main contribution of this paper can be summarized as follows:
\begin{itemize}
    \item Different from current algorithms in multi-hop networks, the proposed HBCT algorithm enables SUs to work either in conventional or backscatter modes to provide better performance in terms of end-to-end bit delivery. To the best of our knowledge, the hybrid structure of conventional and backscatter communication in multi-hop networks has not been studied before.
    

    
    \item We prove new lemmas and develop our analytical structure to derive closed-form equations for resource allocations in a K-hop network. 
    
    \item We analytically prove that the HBCT algorithm performs better than JOTPA and AB algorithms.

\end{itemize}

The rest of this paper is organized as follows. In Section \ref{Related Work}, we give an overview of related works. Section \ref{SystemModel} presents the system model and expresses transmitted bits of each SU. In Section \ref{OptimalResourceAllocation}, we formulate an optimization problem and obtain an optimal solution for the proposed algorithm. Simulation results and numerical analysis are discussed in Section \ref{SimulationResults}, and the conclusion of this work can be found in Section \ref{Conclusion}.

\section{Related work} \label{Related Work}

In recent years, energy harvesting has attracted much attention. In \cite{ju2014throughput}, resource allocation and fairness have been investigated to maximize sum-throughput. As an extension of \cite{ju2014throughput}, the authors in \cite{ju2014optimal}  studied a wireless powered communication network (WPCN) with a full-duplex hybrid access point (HAP) that enables the HAP to broadcast energy and receive independent information from the users simultaneously. Energy harvesting CRNs (EH-CRNs) have been introduced for many sensor network applications. For applications with higher energy consumption, WPCN can be combined with a CRN to form cognitive-WPCN (CWPCN). In \cite{lee2015cognitive}, authors studied the sum-throughput maximization problem for a CWPCN, in which SUs utilize an existing licensed spectrum for energy harvesting and data transmission. A multi-user MIMO CWPCN is studied in  \cite{kim2016sum}, wherein the optimal time allocation and transmit covariance matrices have been provided to maximize the network sum-throughput. A new frame structure has been suggested in \cite{qin2017wireless}. The authors split the time frame into four slots for spectrum sensing and data transmission, each with a separate energy harvesting time slot. Then they obtained power outage probability. In  \cite{kalamkar2016resource}, the authors integrated an energy harvesting network with a cooperative CRN. They proposed a novel scheme and investigated resource allocation and fairness problems. Authors in  \cite{ramezani2017throughput} studied the sum-throughput maximization problem in a WPCN, wherein users transmit to the HAP in a dual-hop manner. An underlay multi-hop EH-CRN has been investigated in \cite{xu2017end}. The authors studied the end-to-end throughput maximization problem and calculated time and power allocations thanks to the convex optimization techniques.

Recently, some authors paid attention to backscatter communications as a promising technique to improve the system's performance. Experimental results have shown that by using such a technique, achieving at least 1 kbps bitrate in about 100 meters communication range for bistatic backscatter communication systems (BBCSs), and about 1-5 meters communication range for ambient backscatter communication systems (ABCSs) is feasible \cite{liu2013ambient,kimionis2012bistatic,kampianakis2014wireless, alevizos2014channel, daskalakis2017ambient, parks2015turbocharging}. A backscatter receiver is designed in \cite{kimionis2012bistatic}, in which carrier frequency offset (CFO) is eliminated, and near-optimal detectors are adopted to increase the communication range of BBCSs. Authors in \cite{kampianakis2014wireless} suggested two-phase filtering in order to increase the communication range of BBCSs. In  \cite{alevizos2014channel}, an actual channel coding is proposed to reduce bit error rate (BER) and therefore increase the communication coverage of BBCSs. Different from \cite{kimionis2012bistatic, kampianakis2014wireless, alevizos2014channel}, a prototype of ABCSs is introduced in  \cite{liu2013ambient}. In this prototype, a backscatter pair can communicate at 1 kbps bitrate in a distance of 2.5 and 1.5 feet ranges, in outdoor and indoor environments, respectively. Authors in \cite{daskalakis2017ambient} developed an ABCS that exploits FM signals broadcasted by an FM radio station. They adopted FM0 encoding and ON-OFF keying (OOK) modulation for the backscatter transmitter. The prototype is deployed and tested in an indoor environment, and 5 meters distance leading to 2.5 kbps of bitrate. In  \cite{parks2015turbocharging}, authors improved transmission range and communication bitrate by introducing a multi-antenna transmitter and a code-division multiple access (CDMA) mechanism. In \cite{bharadia2015backfi}, the authors proposed a communication system, named BackFi, that enables IoT sensors to utilize ambient WiFi signals and transmit data to the WiFi access point (AP). They designed a prototype and showed that bitrate up to 5 Mbps over 1 m of distance, and 1 Mbps over 5 m of distance, are possible.

As EH-CRNs, backscatter communication enables sensors to operate low-power and utilize existing frequencies. Accordingly, integrating backscatter communication and EH-CRN can bring more efficiency into IoT applications. In \cite{hoang2017ambient}, a novel protocol is proposed to improve the performance of the secondary system, based on integrating ambient backscatter technique with a CRN. The authors considered only a couple of SUs as transmitter and receiver and then maximized throughput by finding a trade-off between backscatter time, harvest time, and transmit time in underlay and overlay CRNs. A backscatter-assisted WPCN has been investigated in \cite{lyu2017wireless} to maximize sum-throughput. A convex optimization problem is formulated, and optimal time allocations and SUs’ working mode permutations are obtained. Nevertheless, in that scheme, only one user can work in backscatter mode, and the rest of the users aligned before that one are unable to transmit data to the HAP. The authors in \cite{lyu2018throughput} take into consideration a CWPCN with hybrid backscatter assisted for IoT applications. They utilize bistatic backscatter and ambient backscatter in an overlay CWPCN to improve network performance. Next, they calculated the optimal solution for only a single SU using convex optimization techniques. Besides the works mentioned above, several papers in the literature studied the performance of adopting backscatter communication in wireless communication networks \cite{zhuang2020optimal, ke2020resource, shi2020energy, li2021physical, zhuang2022exploiting}.

Unlike the aforementioned works, we consider a multi-hop network and develop a hybrid algorithm to utilize both backscatter and conventional transmission methods. Besides, convex optimization techniques provide an analytical structure by which closed-form equations for K-hop networks are derived.

\section{System Model} \label{SystemModel}
\begin{figure}

    \centering
    
    \includegraphics[width = \columnwidth]{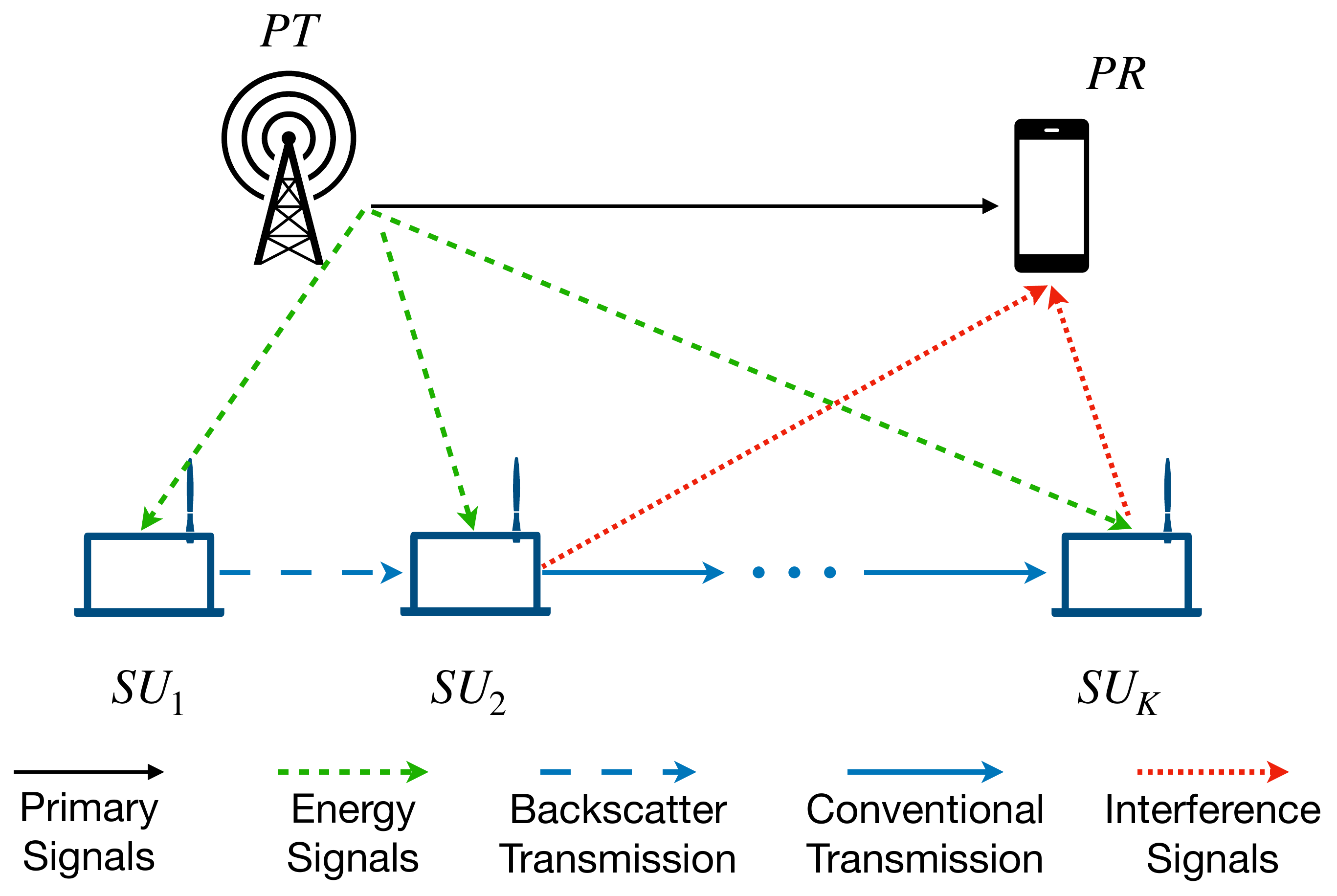}
    
    \caption{System model}
    
    \label{fig:system model}
    
\end{figure}

In this section, we propose a system model which can be utilized in IoT applications. Fig. \ref{fig:system model} depicts an underlay multi-hop BackCom-EH-CRN secondary network coexisting with a primary network. The primary network consists of a primary transmitter (PT) and a primary receiver (PR), operating in a licensed spectrum. The secondary network consists of K SUs, where $SU_k$ indicates k-$th$ user. The secondary network works in a multi-hop method to deliver data from the source, $SU_1$, to the destination, $SU_K$. Each $SU_k$ has two options for transmitting data, i.e., HTT mode and BackCom mode. However, SUs cannot use these two methods simultaneously\cite{zhang2014enabling}. We suppose SUs are battery-free but equipped with a super-capacitor and harvest energy from the PT to transmit their data to the next hop in HTT mode. Hence, the SUs exploit both spectra and energy from the PT, as assumed in \cite{xu2017end}. Besides the HTT method, each SU is equipped with a BackCom transceiver; hence, it can backscatter data to the next hop. In fact, in the case of BackCom mode, SUs exploit primary signals as incident signals \cite{lyu2018throughput, hoang2017ambient}. It is worth noting that in BackCom mode, we consider incident signal as excitation signal; therefore, the harvesting energy phase is not required for BackCom mode \cite{lyu2018throughput}.

The block structure adopted for this network is illustrated in Fig. \ref{fig:frame structure}, with total duration T. At the beginning of each time block, all SUs start to harvest energy from the PT in order to charge their super-capacitors for transmitting information in the HTT mode.
Each $SU_k$ continues to harvest energy up to its transmission period, i.e. $\tau_k$. However, at the beginning of the time block, all users are out of energy; therefore, there is an initial energy harvesting period $\tau_0$ in which all users are silent and just harvest energy. For each time block, we have

\begin{equation}\label{TotTimeConst}
    \sum_{k=0}^{K}\tau_k \leq T.
\end{equation}

\begin{figure}
    \centering
    \includegraphics[width = \columnwidth]{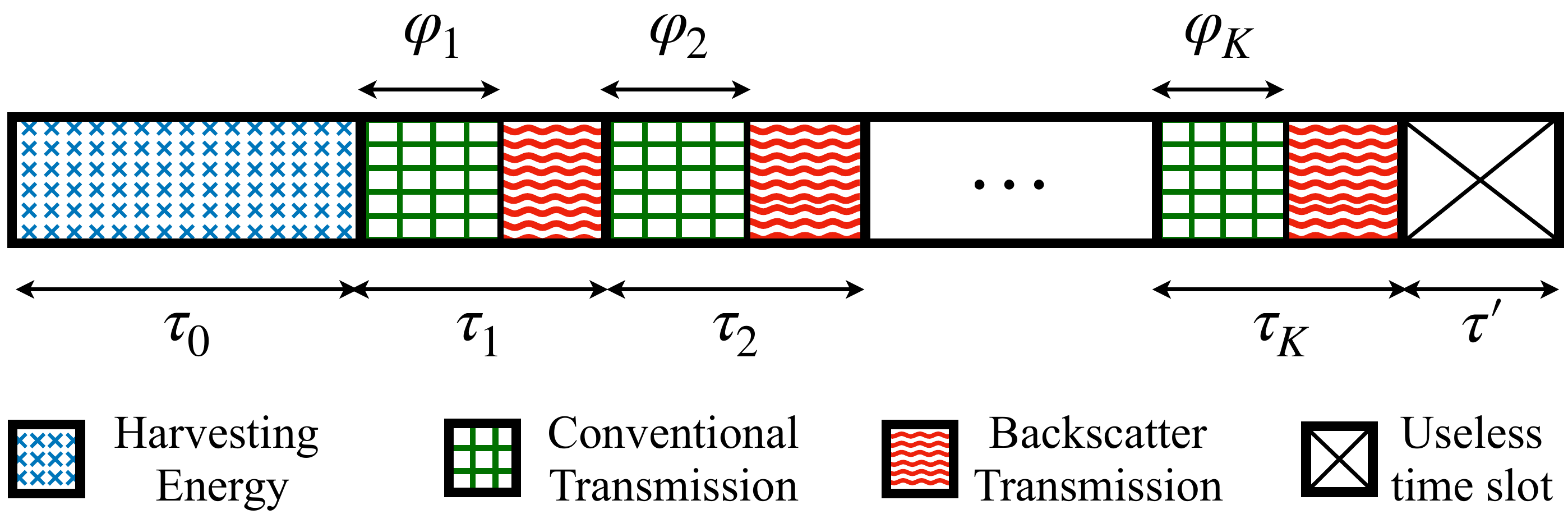}
    \caption{Frame structure}
    \label{fig:frame structure}
\end{figure}

The channel power gains between the PT and  $SU_k$,  $SU_k$ and $SU_{k+1}$, and between  $SU_k$ and the PR, are denoted by $h_k$, $g_k$ and $f_k$, respectively. It is assumed that all channels are quasi-static flat-fading. Hence, the channel gains remain constant during each time block but can change from one block to another. It is further assumed that channels are well known at the SUs by using common techniques for estimating channels \cite{xu2017end}.

Each $SU_k$ can transmit data to the next hop in a hybrid conventional and backscatter manner. Each $SU_k$ can harvest energy just before its transmission time slot for conventional transmission. The transmission power of each $SU_k$ should be under a pre-determined threshold to prevent harmful interference to the PR. As depicted in Fig. 2, each time slot is divided into two sub-slots. The first sub-slot lasts $\varphi_k$ for conventional transmission, and the other lasts $\tau_k - \varphi_k$ for backscatter transmission. Each $SU_k$ has a super-capacitor to store harvested energy for conventional data transmission. Super-capacitors charge very rapidly; however, due to the high self-discharge rate, they discharge quite fast as well \cite{sudevalayam2011energy}. Consequently, each $SU_k$ may spend its energy to transmit data right after the harvesting period. In this paper, we assume that the remaining energy in the super-capacitor or the energy harvested after the transmission cannot be used in the next time block due to the high self-discharge of the super-capacitor \cite{xu2017end, kang2015full}. Based on this assumption, the stored energy of $SU_k$ is harvested solely during the interval $\sum_{l=0}^{k-1}\tau_l, k = 1,...,K$, and is equal to
 \cite{xu2017end}
\begin{equation}\label{eq:HarvestedEnergy}
    E_k = \zeta_k P_{t} h_k \sum_{i=0}^{k-1} \tau_i,
\end{equation}
where $0 \leq \zeta_k \leq 1,  k = 1, ... ,K$, is the energy harvesting efficiency at $SU_k$, and $P_t$ is the transmission power of the PT. We consider all energy harvesting coefficients of SUs are the same, $\zeta_1 = \zeta_2 = ... = \zeta_K = \zeta$. It is worth mentioning that in this paper, we consider non-intended limited energy harvesting from the primary transmitter, which is enough for low-power sensor networks \cite{sudevalayam2011energy, xu2017end}. Hence, in the multi-hop scenario, the transmit power of each $SU_k$ is limited, and we assume the transmission range is one hop. Therefore, we only consider the primary signal as the source of energy harvesting, as expressed in (\ref{eq:HarvestedEnergy})\cite{xu2017end}. 

The signal is received at the $SU_k$ in the primary band, i.e., $y_k$, is the superposition of those from $SU_{k-1}$, and the PT. We consider two different modes of conventional and backscatter transmission, and in each case, we show that $SU_k$ can separate the received data signal and energy signal.

At conventional mode, these components are shown by $x_{k-1}$ and $x_p$, respectively. Therefore
\begin{equation}
    y_k = \sqrt{P_t h_k}x_p + \sqrt{p_k g_{k-1}}x_{k-1} + n_k,
\end{equation}
where $n_k \sim \mathcal{N}(0, \sigma_k^2)$ is the additive white Gaussian noise (AWGN) at $SU_k$. In this paper, we consider $\sigma_1^2 = \sigma_2^2 = ... = \sigma_K^2 = \sigma^2$. $x_p$ and $x_{k-1}$ are uncorrelated satisfying $\textbf{E}[|x_p|^2]=1$ and $\textbf{E}[|x_{k-1}|^2]=1$. Therefore, $x_p$ can be converted into energy by an RF-DC circuit, and $x_{k-1}$ can be decoded by the $SU_k$ \cite{xu2017end}. 

At backscatter mode, we show the component from PT by $x_p$, and the bits transmitted by the $SU_{k-1}$ by $s_{k-1}$, where $s_{k-1}$ can be either $0$ or $1$. Then, the received signal at the receiver is 
\begin{equation}
    y_k = \sqrt{P_t h_k}x_p + \sqrt{P_t h_{k - 1} g_{k-1}}s_{k-1} + n_k.
\end{equation}
The passive coherent processing can be used to separate backscatter and primary signals \cite{van2018ambient}. Specifically, by using a remodulation stage, the receiver can isolate a copy of $x_p$ from the received signal $y_k$ by adopting modulators and a demodulator. Hence, the $x_p$ can be separated and converted into energy, and $s_{k-1}$ can be decoded.

Each $SU_k$ has to guarantee that its conventional transmission power remains under a pre-determined threshold to avoid harmful interference to the PR. The interference power constraint (IPC) can be expressed as 
\begin{equation}\label{Interference Power Constraint}
    p_k f_k \leq I_p,
\end{equation}
where $I_p$ is the threshold, and $p_k$ is the allocated power to $SU_k$. The energy consumed during conventional transmission by each $SU_k$ should not exceed the harvested energy; accordingly, we can state consumed power constraint (CPC) with respect to (\ref{eq:HarvestedEnergy}) as
\begin{equation}\label{CnsumdPowrConstrnt}
    p_k \varphi_k \leq \zeta P_{t} h_k  \sum_{l=0}^{k-1} \tau_l.  
\end{equation}


It is worth mentioning that in this work, we do not consider circuit power consumption for calculation simplicity \cite{xu2017end,kang2015full, qin2017wireless, lyu2017wireless, ramezani2017throughput, jafari2019sum, lee2015cognitive, ju2014optimal, 9631169}. Since the benchmark scheme does not consider circuit power consumption, the comparison between the proposed scheme and the benchmark is valid.
The number of transmitted bits of $SU_k$ can be found as
\begin{equation}\label{TrnsmitBitHyBaC}
\begin{split}
    R_k(\tau_k, \varphi_k, p_k)
    & = \varphi_k W \log_2(1 + p_k \gamma_k) + (\tau_k - \varphi_k)B_k^b,
\end{split}
\end{equation}
where $W$ is the bandwidth, $B_k^b$ is the backscatter rate of $SU_k$ determined by the receiver's RC circuit \cite{liu2013ambient}, and $\gamma_k = \frac{g_k}{\sigma^2}$. It is worth noting that the signals emitted by the PT is no longer considered as interference, and it is converted into energy \cite{xu2017end}.

\section{Optimal resource allocation} \label{OptimalResourceAllocation}
In this section, we provide a solution for the end-to-end bit delivery maximization problem for the proposed HBCT algorithm. The bottleneck link restricts the end-to-end bit delivery of a multi-hop path. Let us assume $\bm{\tau} = [\tau_1,...,\tau_K]$ as the vector of total allocated time to SUs, $\bm{\varphi} = [\varphi_1,...,\varphi_K]$ and $\bm{p} = [p_1,...,p_K]$ as vectors of allocated time and power for the conventional transmission, respectively.
The total end-to-end bit delivery can be expressed as
\begin{equation}\label{totbitdlvry}
    R(\bm{\tau}, \bm{\varphi}, \bm{p}) = \min_{1 \leq k \leq K}{R_{k}(\tau_k, \varphi_k, p_k)}.
\end{equation}
To maximize $R(\bm{\tau}, \bm{\varphi}, \bm{p})$, the following optimization problem is formulated.
\begin{align}\label{OptProb}
        \max_{\bm{\tau}, \bm{\varphi}, \bm{p}} \quad
        &R(\bm{\tau}, \bm{\varphi}, \bm{p}), \\
        s.t. \quad
        & (\ref{TotTimeConst}), (\ref{Interference Power Constraint}), (\ref{CnsumdPowrConstrnt}), \nonumber\\
        & 0 \leq \varphi_k \leq \tau_k,  \label{phi constraint}\\
        & 0 \leq \tau_k \leq T.           \label{tau constraint}
\end{align}
Since there is a product of two optimization variables, i.e., $p_k$ and $\tau_k$ in (\ref{CnsumdPowrConstrnt}), problem (\ref{OptProb}) is non-convex. Regarding equations (\ref{totbitdlvry}), (\ref{Interference Power Constraint}), and (\ref{CnsumdPowrConstrnt}), and by substituting $p_k = \frac{e_k}{\varphi_k}$, problem (\ref{OptProb}) is transformed to a convex optimization problem as
\begin{align}\label{TrnsfrmdOptProb}
        \max_{\bm{\tau}, \bm{\varphi}, \bm{e}} \quad \min_{1 \leq k \leq K} \quad
        & R_{k}(\tau_k, \varphi_k, e_k),\\
        s.t. \quad
        &(\ref{TotTimeConst}), (\ref{phi constraint}), (\ref{tau constraint}), \nonumber \\
        & e_k \leq \zeta P_t h_k \sum_{i=0}^{k-1}\tau_i, \label{enrgy consmption contraint} \\
        & e_k \leq \frac{I_p}{f_k} \varphi_k, \label{intrference constraint}
\end{align}
where $\bm{e} = [e_1,...,e_K]$ is the vector of allocated energy to SUs, and $R_k(\tau_k,\varphi_k,e_k)$ is as
\begin{equation}\label{R_k(e)}
    R_k(\tau_k,\varphi_k,e_k)=\varphi_k W \log_2(1+\frac{e_k}{\varphi_k} \gamma_k)+(\tau_k-\varphi_k)B_k^b.
\end{equation}

\begin{lemma}\label{lem:Convexity R}
    $R(\bm{\tau}, \bm{\varphi}, \bm{e}) = \min\limits_{0 \leq k \leq K} R_{k}(\tau_k, \varphi_k, e_k)$ is a jointly concave function of $\bm{\tau}$, $\bm{\varphi}$ and $\bm{e}$.
\end{lemma}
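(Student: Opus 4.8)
The plan is to lean on two standard facts from convex analysis: the pointwise minimum of a family of concave functions is concave, and the perspective of a concave function is jointly concave. Because $R(\bm{\tau},\bm{\varphi},\bm{e})$ is by definition the pointwise minimum of the $R_k$, it suffices to prove that each $R_k(\tau_k,\varphi_k,e_k)$ is jointly concave in its three arguments. Joint concavity of the minimum then follows at once, and the fact that distinct $R_k$ involve disjoint blocks of coordinates is harmless, since each $R_k$ may simply be regarded as a concave function on the full space that ignores the remaining variables.

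First I would decompose $R_k$ as written in (\ref{R_k(e)}) into the logarithmic term $\varphi_k W \log_2(1+\frac{e_k}{\varphi_k}\gamma_k)$ and the remainder $(\tau_k-\varphi_k)B_k^b$. The latter is affine in $(\tau_k,\varphi_k)$ and therefore trivially concave. Since a sum of concave functions is concave, and since the logarithmic term does not involve $\tau_k$, the entire question reduces to establishing the joint concavity of the logarithmic term in $(\varphi_k,e_k)$ on the region $\varphi_k>0$.

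The key step, and the one I expect to be the main obstacle, is to recognize this logarithmic term as a perspective transform. Define $f(e_k)=W\log_2(1+\gamma_k e_k)$, which is concave in $e_k$ as the composition of the concave logarithm with an affine map. Its perspective is $\varphi_k\, f(e_k/\varphi_k)=\varphi_k W \log_2(1+\frac{e_k}{\varphi_k}\gamma_k)$, which is exactly our term; hence it is jointly concave in $(\varphi_k,e_k)$ for $\varphi_k>0$ by the perspective rule. The subtlety is twofold: spotting that the earlier substitution $p_k=e_k/\varphi_k$ has produced precisely a perspective, and handling the boundary $\varphi_k=0$, where the expression must be read as its limiting value $0$ so that concavity extends to the closed feasible region.

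If one prefers a self-contained argument rather than invoking the perspective rule, I would instead compute the $2\times2$ Hessian of the logarithmic term with respect to $(\varphi_k,e_k)$ and verify directly that it is negative semidefinite: it turns out to be a rank-one matrix proportional to $f''(\cdot)\le 0$, so its determinant vanishes while both diagonal entries are nonpositive, giving a nonpositive trace and hence a negative semidefinite Hessian. Either route establishes the joint concavity of each $R_k$, and taking the pointwise minimum over $k$ then yields the joint concavity of $R(\bm{\tau},\bm{\varphi},\bm{e})$, completing the proof.
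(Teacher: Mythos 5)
Your proof is correct and follows essentially the same route as the paper's own argument: write each $R_k$ as an affine term plus the perspective of the concave function $e_k \mapsto W\log_2(1+\gamma_k e_k)$, invoke preservation of concavity under the perspective operation, and conclude via the pointwise minimum of concave functions. The extra care you take with the boundary $\varphi_k = 0$ and the disjoint coordinate blocks, as well as the optional Hessian verification, are sound refinements but do not change the substance of the argument.
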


\begin{proof}
    See Appendix \ref{appendix:Concavity OptProb}.
\end{proof}
From Lemma \ref{lem:Convexity R} and all the affine constraints, it is observed that problem (\ref{TrnsfrmdOptProb}) is a convex optimization problem. The maximum end-to-end bit delivery can be achieved by utilizing the following propositions.

\begin{proposition}\label{TimeConst eqlty}
    The end-to-end bit delivery is maximum when we have
    \begin{align}
        &\sum_{k=0}^{K}\tau_k = T.
        \label{eqltyTotalTimeCons}
    \end{align}
\end{proposition}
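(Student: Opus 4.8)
The plan is to argue by contradiction, exploiting the monotonicity of each per-hop rate $R_k$ in the time variables. First I would suppose that some optimal solution $(\bm{\tau}^\star, \bm{\varphi}^\star, \bm{e}^\star)$ attains the optimal value $R^\star$ while leaving a strictly positive amount of unused time, i.e. $\sum_{k=0}^{K}\tau_k^\star = T - \delta$ with $\delta > 0$. I would then construct a feasible point with strictly larger objective, which contradicts optimality and forces \eqref{eqltyTotalTimeCons}.

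The construction rests on two monotonicity facts that I would verify at the outset. From \eqref{R_k(e)}, $R_k$ is strictly increasing in $e_k$ (its partial derivative equals $W\gamma_k/[\ln 2\,(1+e_k\gamma_k/\varphi_k)]>0$) and strictly increasing in $\tau_k$ with slope $B_k^b>0$ through the backscatter term. Moreover, the harvest time $\tau_0$ appears in the energy bound \eqref{enrgy consmption contraint} of every hop, since each right-hand side $\zeta P_t h_k\sum_{i=0}^{k-1}\tau_i$ contains $\tau_0$; hence raising $\tau_0$ simultaneously relaxes the energy constraint of all $K$ users, while leaving \eqref{phi constraint}, \eqref{tau constraint} and the interference bound \eqref{intrference constraint} intact.

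With these facts, I would split the slack $\delta$ between $\tau_0$ and the individual $\tau_k$ of the bottleneck hops (those attaining the minimum $R^\star$). For a bottleneck hop whose binding constraint on $e_k$ is the energy bound \eqref{enrgy consmption contraint}, a small increase of $\tau_0$ enlarges its admissible $e_k$ and, by the first monotonicity fact, strictly increases $R_k$. For a bottleneck hop that is instead interference-limited (its $e_k$ capped by \eqref{intrference constraint}), I would spend part of $\delta$ on its own $\tau_k$, raising $R_k$ by $B_k^b$ times the increment. Because none of these perturbations decreases any other $R_j$, every bottleneck rate can be pushed strictly above $R^\star$ while the non-bottleneck rates remain at least $R^\star$, so the new minimum exceeds $R^\star$ --- the desired contradiction.

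The main obstacle I anticipate is precisely this case distinction at the bottleneck: simply dumping all the slack into $\tau_0$ fails whenever some bottleneck hop is interference-limited, because its rate is then insensitive to the harvested energy. The delicate part of the argument is therefore to show that a single positive budget $\delta$ can be allocated so as to strictly improve all bottleneck hops at once, regardless of whether each is energy- or interference-constrained; once that simultaneous improvement is established, the contradiction --- and hence the equality \eqref{eqltyTotalTimeCons} --- follows immediately.
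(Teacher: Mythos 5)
Your proof is correct, and it rests on the same underlying fact as the paper's own argument---that each $R_k$ is monotonically increasing in the time variables, so leaving slack in \eqref{TotTimeConst} cannot be optimal---but the execution differs enough to be worth comparing. The paper reparametrizes $\varphi_k=(1-c_k)\tau_k$, invokes the forward-referenced result of Lemma~\ref{Lem:time allocate} that $c_k\in\{0,1\}$ at the optimum, checks $\partial R_k/\partial\tau_k>0$ in both cases, and then concludes $\tau'=0$ without explicitly addressing how the slack is distributed among users or how the pointwise minimum in \eqref{totbitdlvry} responds. You instead keep $(\varphi_k,e_k)$ fixed, exploit the backscatter slope $B_k^b>0$ together with the fact that $\tau_0$ enters every energy bound \eqref{enrgy consmption contraint}, and run an explicit exchange/contradiction argument that strictly improves every bottleneck hop simultaneously while leaving the remaining hops' rates and constraints untouched. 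Your version is more self-contained (no dependence on Lemma~\ref{Lem:time allocate}) and more careful about the $\min$ structure; its only extra hypothesis is $B_k^b>0$, which the paper also implicitly needs (for $c_k=1$ the paper's derivative is exactly $B_k^b$). One simplification is available to you: since the $\tau_k$-perturbation alone already yields a strict increase of $B_k^b\,\Delta\tau_k>0$ for any bottleneck hop regardless of which constraint on $e_k$ is binding, the energy-limited versus interference-limited case distinction---which you flag as the delicate point---is actually unnecessary.
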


\begin{proof}
    See Appendix \ref{appendix:TimeConst eqlty}. 
\end{proof}

\begin{proposition}\label{End2End TransmitBits eqlty}
    The end-to-end bit delivery is maximum when the number of transmission bits of  all SUs is equal, i.e., $R_1(\tau_1,\varphi_1,e_1) = R_2(\tau_2,\varphi_2,e_2) = ... = R_K(\tau_K,\varphi_K,e_K)$.
\end{proposition}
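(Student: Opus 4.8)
The plan is to argue by contradiction, showing that a feasible solution in which the per-hop deliveries $R_k$ are not all equal cannot be a maximizer. Suppose $(\bm{\tau},\bm{\varphi},\bm{e})$ is feasible with $R^\star=\min_{1\le k\le K}R_k$ but $R_{k_0}>R^\star$ for some index $k_0$, while some hop $k_1$ attains $R_{k_1}=R^\star$. By Proposition \ref{TimeConst eqlty} we may assume $\sum_{k=0}^{K}\tau_k=T$, so that time is the only scarce, transferable resource; the goal is to exhibit a feasible perturbation that strictly raises every bottleneck hop, hence raises the minimum, contradicting optimality. Because $R$ is jointly concave by Lemma \ref{lem:Convexity R}, producing a single feasible ascent direction at the candidate point suffices to rule it out as a maximizer.

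First I would extract slack from the over-performing hop $k_0$. Since $R_{k_0}$ in (\ref{R_k(e)}) contains the term $(\tau_{k_0}-\varphi_{k_0})B_{k_0}^b$ with positive slope $B_{k_0}^b$, shrinking the backscatter sub-slot of hop $k_0$ by a small $\delta>0$ lowers $R_{k_0}$ continuously, and for $\delta$ small enough $R_{k_0}$ stays above $R^\star$; this liberates $\delta$ units of time. I would then feed this time to the bottleneck hop $k_1$ in whichever way its binding constraint permits: if hop $k_1$ is not energy-limited, i.e. (\ref{enrgy consmption contraint}) is slack, then lengthening its own slot $\tau_{k_1}$ raises its backscatter term at rate $B_{k_1}^b>0$; if hop $k_1$ is energy-limited, the extra time is routed into earlier slots so as to enlarge its harvest budget $\zeta P_t h_{k_1}\sum_{i<k_1}\tau_i$, which relaxes (\ref{enrgy consmption contraint}) and lets $e_{k_1}$, and hence the strictly increasing logarithmic term of (\ref{R_k(e)}), grow. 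In either case $R_{k_1}$ strictly increases.

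The main obstacle is the inter-hop coupling created by (\ref{enrgy consmption contraint}): a hop's harvest budget depends on the cumulative slot times of all earlier hops, so moving time around can simultaneously help one hop and starve another. I would handle this by bookkeeping with the cumulative variables $C_k=\sum_{i=0}^{k}\tau_i$, noting that hop $k$'s delivery depends on the earlier slots only through the aggregate $C_{k-1}$ and on its own slot only through $\tau_k=C_k-C_{k-1}$. The delicate case is when the bottleneck lies downstream of all the slack, in the extreme $k_1=K$ while every surplus hop precedes it: here moving a fixed amount of time between hop $k_1$'s own slot and the upstream aggregate trades its backscatter term, which is linear with slope $B_{k_1}^b$, against its conventional term, which is strictly concave in $e_{k_1}$ through (\ref{intrference constraint}) and (\ref{enrgy consmption contraint}). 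Unless the point already balances these two marginal values, one of the two directions strictly increases $R_{k_1}$; and if it does balance them, then contracting $\tau_{k_1}$ hands time to the upstream aggregate without lowering $R_{k_1}$, which can in turn be used to lift the remaining bottleneck hops. Carrying out this case analysis, and choosing $\delta$ small enough that no surplus hop drops to $R^\star$ and every affected constraint stays feasible, yields a feasible point with a strictly larger minimum, contradicting optimality and forcing $R_1=\cdots=R_K$.
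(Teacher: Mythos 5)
The paper does not actually prove this proposition itself --- its ``proof'' is a pointer to \cite{xu2017end} --- so your argument has to be judged on its own merits, and it has a genuine gap in exactly the case you flag as delicate. The contradiction strategy requires a feasible perturbation that \emph{strictly} raises every bottleneck hop, but when the bottleneck $k_1$ lies downstream of the surplus hop $k_0$, the time you free from $k_0$'s slot is not really transferable: $\tau_{k_0}$ is simultaneously part of $k_1$'s harvesting window $\sum_{i<k_1}\tau_i$ in (\ref{enrgy consmption contraint}), so any $\delta$ you move into $\tau_{k_1}$ shrinks $k_1$'s energy budget by the same $\delta$, while any $\delta$ you park in $\tau_0$ leaves $R_{k_1}$ untouched. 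Your escape hatch --- ``if the two marginal values balance, contract $\tau_{k_1}$ and use the time to lift the remaining bottleneck hops'' --- fails precisely when $k_1$ is the \emph{only} bottleneck and its marginals are balanced: then $R_{k_1}$ is already at its unconstrained-by-$k_0$ maximum (it is a concave function of $\tau_{k_1}$ with the harvest budget $T-\sum_{i\neq 0,k_1}\tau_i-\tau_{k_1}$, sitting at a stationary point), there is no ascent direction, and yet $R_{k_0}>R_{k_1}$. Such a point is a legitimate maximizer of $\min_k R_k$ with unequal rates, so the strong statement your contradiction argument is trying to establish (``every maximizer equalizes the $R_k$'') is actually false; no amount of bookkeeping with the cumulative variables $C_k$ will rescue it.

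What the proposition needs --- and all that the subsequent derivation of (\ref{HBC BitDlvry}) uses --- is the weaker existence claim: \emph{some} maximizer has $R_1=\dots=R_K$. That admits a much shorter argument in the opposite direction: take any maximizer, and for every hop with $R_{k}>\min_j R_j$ throttle it down by continuously decreasing $e_k$ (and, if needed, adjusting $\varphi_k$ within $[0,\tau_k]$) until $R_k$ equals the minimum. This touches neither the time vector $\bm{\tau}$ nor any other hop's constraints --- decreasing $e_k$ only loosens (\ref{enrgy consmption contraint}) and (\ref{intrference constraint}) --- and leaves $\min_k R_k$ unchanged, so the resulting equal-rate point is still optimal. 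You should either prove that version, or restrict your exchange argument to the claim that a strict surplus at \emph{every} non-bottleneck configuration can be removed without loss, rather than claiming it can always be converted into a strict gain. A minor additional point: your first step assumes hop $k_0$ has a nonempty backscatter sub-slot to shrink; if $\varphi_{k_0}=\tau_{k_0}$ you must instead reduce $\varphi_{k_0}$ and $e_{k_0}$ together to preserve (\ref{intrference constraint}).
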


\begin{proof}
    Please refer to \cite{xu2017end}.
\end{proof}

Proposition \ref{End2End TransmitBits eqlty} states that all links in the multi-hop path should have the same delivery bits for maximizing end-to-end bit delivery.

Next, we introduce $R$ as a new optimization variable which is a lower bound on each link's bit delivery. Hence, maximization of $R$ guarantees that the bit delivery of the bottleneck is maximum, and maximum end-to-end bit delivery reaches. Therefore, we can reformulate problem (\ref{TrnsfrmdOptProb}) as
\begin{align}\label{2Reform OptProb}
    \max_{\bm{\tau}, \bm{\varphi}, \bm{e}, R} \quad
    & R, \\
    s.t. \quad
    & R_{k}(\tau_k,\varphi_k,e_k) \geq R, \label{End2End Const} \\
    &(\ref{TotTimeConst}), (\ref{phi constraint}), (\ref{tau constraint}), (\ref{enrgy consmption contraint}), (\ref{intrference constraint}). \nonumber
\end{align}

Since (\ref{TrnsfrmdOptProb}) is concave, problem (\ref{2Reform OptProb}) is also concave and satisfies Slater's condition \cite{boyd2004convex}. Therefore, we can solve the dual problem because the duality gap between the primal and dual problems is zero. To do so, we first provide the partial Lagrangian function of problem (\ref{2Reform OptProb}) with respect to (\ref{End2End Const}) as
\begin{equation}
\begin{split}
    \mathcal{L}(\bm{\tau},\bm{\varphi},\bm{e},\bm{\lambda}) 
     & = R - \sum_{k=1}^K \lambda_k \Big( R - R_k(\tau_k, \varphi_k, e_k) \Big)\\    
     & = R(1 - \sum_{k=1}^K \lambda_k) + \sum_{k=1}^K \lambda_k R_k (\tau_k, \varphi_k, e_k),
\end{split}
\end{equation}
where $\bm{\lambda} = [\lambda_1, ..., \lambda_K]$ is the vector of non-negative Lagrange multipliers associated with the constraints in (\ref{End2End Const}). Then, the dual function of problem (\ref{2Reform OptProb}) can be obtained as

\begin{equation}\label{DualFunc}
    \mathcal{G}(\bm{\lambda}) = \max_{(\bm{\tau,\varphi,e}) \in \mathcal{D}}{ \mathcal{L}}(\bm{\tau},\bm{\varphi},\bm{e},\bm{\lambda}),
\end{equation}
where $\mathcal{D}$ is the feasible set of $(\bm{\tau}, \bm{\varphi}, \bm{e})$ determined by $(\ref{TotTimeConst}), (\ref{phi constraint}), (\ref{tau constraint}), (\ref{enrgy consmption contraint}), (\ref{intrference constraint})$,
thus, the Lagrange dual problem of the problem (\ref{2Reform OptProb}) is given by $\min \limits_{\bm{\lambda} \geq 0} \mathcal{G}(\bm{\lambda})$. With a given $\bm{\lambda}$, we can reformulate (\ref{DualFunc}) as
\begin{equation}\label{DualRrob}
\begin{split}
    \max_{\bm{\tau}, \bm{\varphi}, \bm{e}} \quad
    & \sum_{k=1}^K \lambda_k R_k(\tau_k,\varphi_k,e_k), \\
    s.t. \quad
    &(\ref{TotTimeConst}), (\ref{phi constraint}), (\ref{tau constraint}), (\ref{enrgy consmption contraint}), (\ref{intrference constraint}).
\end{split}
\end{equation}
Note that the constant term of $R(1 - \sum_{k=1}^{K} \lambda_k)$ is eliminated because it does not affect convexity. In problem (\ref{DualRrob}), constraints (\ref{TotTimeConst}) and (\ref{enrgy consmption contraint}) are coupling constraints, while (\ref{phi constraint}), (\ref{tau constraint}), and (\ref{intrference constraint}) are coupling variables. To handle this problem, we utilize the decomposition method to relax constraints (\ref{TotTimeConst}), (\ref{phi constraint}), and (\ref{enrgy consmption contraint}) \cite{boyd2004convex}. Hence, the partial Lagrangian function of problem (\ref{DualRrob}) can be expressed as
\begin{equation}\label{lagrange}
    \begin{split}
        \mathcal{L}^{'}(\bm{\tau},\bm{\varphi},\bm{e},\bm{\mu},\bm{\xi},\nu) & = \sum_{k=1}^K \lambda_k R_k(\tau_k,\varphi_k,e_k) \\
        & - \sum_{k=1}^{K}{\mu_k \bigg( e_k - \zeta P_t h_k \sum_{n=0}^{k-1}\tau_n \bigg)} \\
        & - \sum_{k=1}^K \xi_k(\varphi_k - \tau_k) - \nu \bigg( \sum_{k=0}^{K}\tau_k - T \bigg),\\
    \end{split}
\end{equation}
where $\bm{\mu} = [\mu_1 , ... , \mu_K]$ and $\bm{\xi} = [\xi_1 , ... , \xi_K]$ are vectors of Lagrangian multipliers associated with (\ref{enrgy consmption contraint}) and (\ref{phi constraint}), respectively, and $\nu$ is the Lagrangian multiplier associated with (\ref{TotTimeConst}). The Lagrange dual function of problem (\ref{DualRrob}) is expressed as $\mathcal{G^{'}(\bm{\mu}, \bm{\xi}, \nu)} = \max \limits_{(\bm{\tau},\bm{\varphi},\bm{e}) \in \mathcal{D'}}\mathcal{L}^{'}(\bm{\tau},\bm{\varphi},\bm{e},\bm{\mu},\bm{\xi},\nu)$, where $\mathcal{D'}$ is the feasible set associated with (\ref{tau constraint}) and (\ref{intrference constraint}). Then, we can obtain the Lagrangian dual problem of the problem (\ref{DualRrob}) as $\min\limits_{\bm{\mu}, \bm{\xi}, \nu  \geq 0} \mathcal{G^{'}(\bm{\mu}, \bm{\xi}, \nu)}$.

The optimal power allocation for the HBCT algorithm provides by Lemma \ref{lemma:HTT OPA}.

\textbf{Note:} We use * as a superscript of variables to indicate their optimal values. For example, $p_k^*$ means the optimal value of $p_k$.
\begin{lemma}\label{lemma:HTT OPA}
    For a given $\lambda_k$ and by considering inequality (\ref{Interference Power Constraint}), the optimal power allocation for the proposed HBCT algorithm is given by
    \begin{equation}\label{HTT power allocation}
        p_k^* = \min \Big\{\frac{e_k}{\varphi_k},\frac{I_p}{f_k} \Big\},
    \end{equation}
 where $\frac{e_k}{\varphi_k}$ can be obtained from the following equation in which for the sake of simplicity, we use $A_k$ to indicate $1 + \frac{e_k}{\varphi_k} \gamma_k$. 
    \begin{equation} \label{final log equation}
        f(A_k) - \zeta P_t \gamma_k h_k + 1 = 0,
    \end{equation}
where
\begin{equation*}
    f(x_k) = x_k \ln{x_k} - x_k \bigg( \frac{\zeta P_t}{\lambda_k} \sum_{i=1}^{k-1}\frac{\lambda_i \gamma_i h_i}{x_i} + 1 \bigg).
\end{equation*}

\end{lemma}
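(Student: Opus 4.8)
The plan is to solve the inner maximization (\ref{DualRrob}) by applying the Karush--Kuhn--Tucker (KKT) conditions to the partial Lagrangian $\mathcal{L}'$ in (\ref{lagrange}). Since problem (\ref{2Reform OptProb}) is concave (Lemma \ref{lem:Convexity R}) and satisfies Slater's condition, stationarity together with primal/dual feasibility and complementary slackness are both necessary and sufficient for optimality, so it suffices to set the partial derivatives of $\mathcal{L}'$ to zero and solve. The interference constraint (\ref{intrference constraint}), i.e.\ $e_k/\varphi_k \le I_p/f_k$, is retained in the feasible set $\mathcal{D}'$ rather than dualized; hence after computing the unconstrained stationary value of $e_k/\varphi_k$ I would simply project it onto this one-sided box, which immediately produces the outer form $p_k^{*}=\min\{e_k/\varphi_k,\;I_p/f_k\}$ once we recall $p_k=e_k/\varphi_k$. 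The real work is therefore to characterize the unconstrained $e_k/\varphi_k$, equivalently $A_k=1+\frac{e_k}{\varphi_k}\gamma_k$.

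First I would record the per-variable stationarity identities. Writing $R_k$ as in (\ref{R_k(e)}), differentiation in $e_k$ gives $\lambda_k\frac{\partial R_k}{\partial e_k}=\mu_k$, which yields $\mu_k=\frac{W}{\ln 2}\,\frac{\lambda_k\gamma_k}{A_k}$; this clean relation is what later lets me replace every multiplier $\mu_j$ by a term proportional to $\lambda_j\gamma_j/A_j$. Differentiation in $\varphi_k$ is the one calculation needing care, because $\varphi_k W\log_2(1+\frac{e_k}{\varphi_k}\gamma_k)$ is a perspective-type function; its derivative works out to $\frac{W}{\ln 2}\big(\ln A_k-1+\frac{1}{A_k}\big)$, so stationarity gives $\xi_k=\lambda_k\frac{W}{\ln 2}\big(\ln A_k-1+\frac{1}{A_k}\big)-\lambda_k B_k^{b}$. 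The backscatter-rate terms $\lambda_k B_k^b$ will cancel later, consistent with $B_k^b$ being absent from (\ref{final log equation}).

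Next come the time variables. Differentiating in $\tau_k$ for $k\ge 1$, and remembering that $\tau_k$ enters the energy constraints (\ref{enrgy consmption contraint}) of all later hops $j>k$, gives $\lambda_k B_k^b+\zeta P_t\sum_{j=k+1}^{K}\mu_j h_j+\xi_k-\nu=0$. By itself this produces a \emph{forward} sum over $j>k$, whereas (\ref{final log equation}) contains a \emph{backward} partial sum over $i<k$; reconciling the two is the crux of the proof. The missing ingredient, which is easy to overlook, is the stationarity condition for the initial harvesting slot $\tau_0$: since $\tau_0$ appears in the harvested energy of \emph{every} hop, $\partial\mathcal{L}'/\partial\tau_0=0$ gives $\nu=\zeta P_t\sum_{j=1}^{K}\mu_j h_j$. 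Substituting this value of $\nu$ together with the $\varphi_k$-stationarity into the $\tau_k$-identity cancels $\lambda_k B_k^b$ and collapses $\nu-\zeta P_t\sum_{j>k}\mu_j h_j$ into the partial sum $\zeta P_t\sum_{j=1}^{k}\mu_j h_j$. Replacing each $\mu_j$ by $\frac{W}{\ln 2}\frac{\lambda_j\gamma_j}{A_j}$, dividing out the common factor $W/\ln 2$, splitting off the $j=k$ term, and multiplying through by $A_k/\lambda_k$ turns the identity into $A_k\ln A_k-A_k\big(\frac{\zeta P_t}{\lambda_k}\sum_{i=1}^{k-1}\frac{\lambda_i\gamma_i h_i}{A_i}+1\big)-\zeta P_t\gamma_k h_k+1=0$, which is precisely (\ref{final log equation}).

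I expect the main obstacle to be exactly this forward-to-backward conversion: without invoking the $\tau_0$ condition one is left with a sum over downstream hops and an undetermined $\nu$, and it is only the combination of the two time-stationarity relations that eliminates $\nu$ and produces the self-contained recursion in which $A_k$ is fixed by $A_1,\dots,A_{k-1}$ (the empty sum fixing $A_1$). A secondary point worth verifying is that (\ref{final log equation}) has a unique admissible root $A_k>1$: written as $\lambda_k\big(\ln A_k-1+\frac{1}{A_k}\big)=\zeta P_t\sum_{i=1}^{k-1}\frac{\lambda_i\gamma_i h_i}{A_i}+\zeta P_t\frac{\lambda_k\gamma_k h_k}{A_k}$, the left-hand side increases from $0$ while the right-hand side decreases on $A_k\ge 1$, so the two sides meet exactly once; projecting the resulting $e_k/\varphi_k$ onto (\ref{intrference constraint}) then completes the argument.
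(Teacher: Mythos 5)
Your proposal is correct and follows essentially the same route as the paper: apply the KKT stationarity conditions of $\mathcal{L}'$ with respect to $\tau_0,\tau_k,\varphi_k,e_k$ (yielding exactly the paper's equations (\ref{KKT2})--(\ref{KKT5})), eliminate $\nu$ via the $\tau_0$ condition to convert the forward sum into the partial sum over $i<k$, substitute $\mu_i=\frac{W}{\ln 2}\frac{\lambda_i\gamma_i}{A_i}$, and finish by projecting onto the interference constraint. You actually supply the algebra the paper dismisses as ``straightforward,'' and your added monotonicity argument for uniqueness of the root $A_k>1$ is a worthwhile detail the paper omits.
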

\begin{proof}
    See Apendix \ref{appendix: power allocation}.
\end{proof}
Having found $p_k^*$, we can write $p_k^* = \frac{e_k^*}{\varphi_k^*}$, where $e_k^*$ and $\varphi_k^*$ represent the optimal values of $e_k$ and $\varphi_k$, respectively. Then, we can calculate the optimal bitrate of conventional transmission as
\begin{equation}\label{Omega}
    \Omega_k^* = W \log_2(1 + p_k^* \gamma_k).
\end{equation}
Also, for notation simplicity, we introduce a new variable $c_k$, such that $\varphi_k = (1 - c_k) \tau_k, 0 \leq c_k \leq 1$. Hence, $c_k$ indicates the portion of time in time slot $\tau_k$ that $SU_k$ spends in BackCom mode. With respect to (\ref{eqltyTotalTimeCons}) and Proposition \ref{End2End TransmitBits eqlty}, the end-to-end bit delivery can be achieved by 
\begin{equation}\label{HBC BitDlvry}
    R_k(\tau_k, \varphi_k, e_k) = \tau_k X_k = \frac{T - \tau_0}{\sum_{i=1}^{K} \frac{1}{X_i}},
\end{equation}
where
\begin{equation}\label{X}
    X_k = B_k^b + (1 - c_k) \big( \Omega_k - B_k^b \big).
\end{equation}

The following Lemma guarantees that the source node, i.e., $SU_1$, consumes the harvested energy completely in conventional mode. In fact, if the network decides to achieve maximum end-to-end bit delivery, the source node has to produce as much data as it can. 
\begin{lemma}\label{lemma: SU_1 HrvstdEnrgy}
The maximum end-to-end bit delivery is achievable when the source node consumes harvested energy completely.    
\end{lemma}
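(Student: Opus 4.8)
The plan is to recast the statement as a complementary-slackness condition and read it off the KKT system of the dual problem (\ref{DualRrob}). Writing $E_1 := \zeta P_t h_1 \tau_0$ for the energy the source harvests during the silent prelude, the phrase ``the source consumes its energy completely'' is precisely the assertion that the energy bound (\ref{enrgy consmption contraint}) is active at $k=1$, i.e. $e_1^* = E_1^*$. By complementary slackness this is equivalent to $\mu_1 > 0$, where $\mu_1$ is the multiplier attached to (\ref{enrgy consmption contraint}) in the Lagrangian (\ref{lagrange}); the whole lemma therefore reduces to establishing $\mu_1 > 0$ at the optimum.

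First I would write the stationarity condition in $e_1$. Augmenting (\ref{lagrange}) with a multiplier $\rho_1 \ge 0$ for the interference bound (\ref{intrference constraint}) at $k=1$, and using $\partial R_1/\partial e_1 = W\gamma_1/(A_1 \ln 2)$ with $A_1 = 1 + (e_1/\varphi_1)\gamma_1$ read off (\ref{R_k(e)}), the condition $\partial \mathcal{L}'/\partial e_1 = 0$ becomes
\begin{equation*}
    \mu_1 + \rho_1 = \lambda_1\,\frac{W\gamma_1}{A_1 \ln 2}.
\end{equation*}
The right-hand side is strictly positive: $A_1$ is finite, $\gamma_1 = g_1/\sigma^2 > 0$, and $\lambda_1 > 0$ because the source is one of the binding bottleneck links identified by Proposition \ref{End2End TransmitBits eqlty} (were $\lambda_1 = 0$, link $1$ would drop out of the dual objective (\ref{DualRrob}) and its rate could not be pinned to the common value $R$). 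Hence $\mu_1 + \rho_1 > 0$, so at least one of the two upper bounds on $e_1$ must be tight at the optimum.

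It remains to rule out the alternative in which the interference bound binds instead, i.e. to exclude $\rho_1 > 0,\ \mu_1 = 0$; this is the crux and the step I expect to be the main obstacle. Here the special status of the source enters: $SU_1$ can replenish its capacitor only during the interval $\tau_0$, so its per-slot budget $E_1/\varphi_1$ is small, and the interference cap $p_1 \le I_p/f_1$ would bind only in the atypical regime $E_1/\varphi_1 > I_p/f_1$. In the energy-limited regime of interest the $\min$ in Lemma \ref{lemma:HTT OPA} selects the energy branch, so $p_1^* = e_1^*/\varphi_1^* \le I_p/f_1$ gives $\rho_1 = 0$, and the displayed identity forces $\mu_1 = \lambda_1 W\gamma_1/(A_1\ln 2) > 0$; complementary slackness then yields $e_1^* = E_1^*$. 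As a consistency check against the closed form (\ref{HBC BitDlvry}): any unused source energy could be spent to raise $p_1$ and hence $\Omega_1$, enlarging $X_1$, shrinking $\sum_{i} 1/X_i$, and strictly increasing $R$, so leaving the source's harvest unused is never optimal, which is exactly the intuition stated before the lemma.
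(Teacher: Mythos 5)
Your reduction to complementary slackness is sound as far as it goes, but the proof does not close: the step you yourself flag as the crux --- excluding the case $\rho_1>0,\ \mu_1=0$ in which the interference bound (\ref{intrference constraint}) rather than the energy bound (\ref{enrgy consmption contraint}) is the active cap on $e_1$ --- is never established. You dispose of it only by restricting to ``the energy-limited regime of interest,'' i.e.\ by assuming that the $\min$ in (\ref{HTT power allocation}) selects the energy branch. The lemma is unconditional, and in the interference-limited case your argument genuinely fails: if $p_1^*=I_p/f_1$ is the binding restriction, then for a fixed $\tau_0$ the source cannot spend all of $\zeta P_t h_1\tau_0$, your closing ``consistency check'' (raise $p_1$ to burn the surplus) is exactly the move the interference cap forbids, and $\mu_1=0$ is perfectly consistent with your stationarity identity. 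So the proposal proves the lemma only under an unstated hypothesis.

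The paper closes the case you leave open by arguing in the opposite direction. Rather than holding $\tau_0$ fixed and trying to push $e_1$ up to the harvested amount, it holds the optimal consumption $e_1^*$ fixed (however it was capped) and pushes the harvest down to meet it: constraint (\ref{enrgy consmption contraint}) at $k=1$ reads $\tau_0\ge e_1^*/(\zeta P_t h_1)$, and by (\ref{HBC BitDlvry}) the common rate $(T-\tau_0)/\sum_{i}(1/X_i)$ is decreasing in $\tau_0$, so the optimum sits at $\tau_0^*=e_1^*/(\zeta P_t h_1)$, which is precisely ``harvested equals consumed'' at the source. That monotonicity argument needs no case split on which constraint caps $e_1$ and no positivity claim for $\lambda_1$. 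If you want to keep the KKT framing, you must supplement it with this $\tau_0$-monotonicity step (or an equivalent argument) to cover the interference-limited branch; as written, the gap is real.
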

\begin{proof}
    See Appendix \ref{appendix: SU_1 hrvstEnrgyComplt}.
\end{proof}

Ultimately, Lemma \ref{Lem:time allocate} expresses the optimal time allocation for the proposed algorithm.

\begin{lemma} \label{Lem:time allocate}
    The optimal time allocation of HBCT is given by
    \begin{equation}\label{HTT tau_0}
        \tau_0^* = \frac{(1 - c_1^*)p_1^*}{\zeta P_t h_1}\tau_1^*,
    \end{equation}
    \begin{equation}\label{HTT tau_i}
        \tau_k^* = \frac{T}{X_k^* \Big(\frac{(1 - c_1^*)p_1^*}{\zeta P_t h_1 X_1^*} + \sum_{i=1}^{K} \frac{1}{X_i^*} \Big)},
    \end{equation}
     where $X_k^* = B_k^b + (1 - c_k^*) \big( \Omega_k^* - B_k^b \big)$ and the optimal value of $c_k$ (i.e. $c_k^*$) is given by the following equations.
    \begin{equation}\label{c_i}
        c_k^* = 
            \begin{cases}
                \text{1} \quad \text{if} \quad \Omega_1^* < B_1^b \big( 1 + \frac{p_1^*}{\zeta P_t h_1} \big), \quad\\
                \text{0} \quad \text{if} \quad \Omega_1^* > B_1^b \big( 1 + \frac{p_1^*}{\zeta P_t h_1} \big), \quad\\
                \text{1} \quad \text{if} \quad \Omega_k^* < B_k^b \quad , k = 2, 3, ..., K,\\
                \text{0} \quad \text{if} \quad \Omega_k^* > B_k^b \quad , k = 2, 3, ..., K.
            \end{cases}
    \end{equation}
        
\end{lemma}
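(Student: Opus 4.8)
The plan is to fix the power allocation via Lemma \ref{lemma:HTT OPA} (so that each $\Omega_k^*$ and $p_1^*$ are treated as known constants) and then pin down the time variables using the two propositions together with Lemma \ref{lemma: SU_1 HrvstdEnrgy}, leaving the mode-split variables $c_k$ to be optimized last. First I would invoke Proposition \ref{End2End TransmitBits eqlty}, which equalizes the per-hop bit delivery, together with (\ref{HBC BitDlvry}) to write $R_k = \tau_k^* X_k^* = R$ for every $k$, hence $\tau_k^* = R / X_k^*$. This collapses the $K+1$ unknown durations down to the scalar $R$ and the harvesting slot $\tau_0^*$.

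Next I would use Lemma \ref{lemma: SU_1 HrvstdEnrgy}: because the source consumes its harvested energy completely, the consumed-power constraint (\ref{CnsumdPowrConstrnt}) is active at $k=1$, giving $p_1^* \varphi_1^* = \zeta P_t h_1 \tau_0^*$. Substituting $\varphi_1^* = (1 - c_1^*)\tau_1^*$ and solving for $\tau_0^*$ yields (\ref{HTT tau_0}) directly. Writing $\tau_1^* = R / X_1^*$, I would then insert both $\tau_0^*$ and $\tau_k^* = R / X_k^*$ into the active total-time equality $\sum_{k=0}^K \tau_k^* = T$ supplied by Proposition \ref{TimeConst eqlty}. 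Factoring out $R$ leaves a single linear equation, and solving for $R$ and back-substituting into $\tau_k^* = R/X_k^*$ produces (\ref{HTT tau_i}).

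It then remains to determine $c_k^*$. Having $R = T/D$ with $D = \frac{(1-c_1)p_1^*}{\zeta P_t h_1 X_1} + \sum_{k=1}^K \frac{1}{X_k}$, maximizing $R$ is equivalent to minimizing $D$. For $k \geq 2$ the variable $c_k$ enters $D$ only through the single term $1/X_k$, so minimizing $D$ amounts to maximizing $X_k = B_k^b + (1-c_k)(\Omega_k^* - B_k^b)$. Since $X_k$ is affine in $c_k$ with slope $-(\Omega_k^* - B_k^b)$, its maximizer on $[0,1]$ is a corner: $c_k^* = 1$ when $\Omega_k^* < B_k^b$ and $c_k^* = 0$ when $\Omega_k^* > B_k^b$, which are the last two cases of (\ref{c_i}).

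The main obstacle is the $k=1$ case, where $c_1$ appears simultaneously in the $\tau_0$ term and inside $X_1$, so the simple corner argument does not apply verbatim. Here I would isolate the $c_1$-dependent part of $D$ as $g(u) = \frac{1 + u\,p_1^*/(\zeta P_t h_1)}{B_1^b + u(\Omega_1^* - B_1^b)}$ with $u = 1 - c_1 \in [0,1]$, and differentiate. The decisive observation is that the $u$-dependent contributions in the numerator of $g'(u)$ cancel, leaving the \emph{constant} numerator $B_1^b\big(1 + p_1^*/(\zeta P_t h_1)\big) - \Omega_1^*$; therefore $g$ is monotone on $[0,1]$ and its minimizer is again a corner, selected purely by the sign of this constant. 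This gives $c_1^* = 1$ when $\Omega_1^* < B_1^b\big(1 + p_1^*/(\zeta P_t h_1)\big)$ and $c_1^* = 0$ otherwise, completing the first two cases of (\ref{c_i}) and the proof.
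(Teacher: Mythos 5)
Your proposal is correct and follows essentially the same route as the paper's Appendix E: it obtains (\ref{HTT tau_0}) from the active source-energy constraint of Lemma \ref{lemma: SU_1 HrvstdEnrgy} with $\varphi_1^*=(1-c_1^*)\tau_1^*$, derives (\ref{HTT tau_i}) from Propositions \ref{TimeConst eqlty} and \ref{End2End TransmitBits eqlty} via $\tau_k^*X_k^*=R$, and settles $c_k^*$ by a monotonicity/corner argument whose sign condition (your constant numerator $B_1^b(1+p_1^*/(\zeta P_t h_1))-\Omega_1^*$) is exactly the numerator of the paper's $\partial R_k/\partial c_1$. Minimizing the denominator $D$ instead of differentiating $R_k$ directly is only a cosmetic difference.
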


\begin{proof}
    See Appendix \ref{appendix: Solv HTT Phase}.
\end{proof}

With respect to Lemma \ref{Lem:time allocate}, if $c_k^* = 1$, $SU_k$ transmits data only in BackCom mode during $\tau_k^*$. However, when $c_k^* = 0$, transmission happens only in conventional mode. The following Lemma will prove that the proposed HBCT algorithm always performs better than JOTPA and AB.

\begin{lemma}\label{HBCT-better}
    The proposed HBCT algorithm performs better, or at least the same as the JOTPA and AB algorithms for different powers of the PT.
\end{lemma}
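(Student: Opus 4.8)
The plan is to recognize both benchmark schemes as restrictions of HBCT and then invoke monotonicity of the optimal value under feasible-set inclusion. First I would identify the two special cases exactly. Setting $c_k=0$ for every $k$ forces $\varphi_k=\tau_k$, so each term in (\ref{totbitdlvry}) collapses to $R_k=\tau_k W\log_2(1+p_k\gamma_k)$, which is precisely the conventional per-hop rate optimized by JOTPA; symmetrically, setting $c_k=1$ for every $k$ forces $\varphi_k=0$, so $R_k=\tau_k B_k^b$, the pure-backscatter rate of AB. I would also check that the constraints degenerate consistently: for $c_k=1$ no conventional power is drawn, so taking $p_k=0$ makes the IPC (\ref{Interference Power Constraint}) and CPC (\ref{CnsumdPowrConstrnt}) vacuous and permits $\tau_0=0$, reproducing AB, whereas for $c_k=0$ the full conventional constraints are active, reproducing JOTPA. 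This exhibits the feasible region of (\ref{OptProb}) as containing the slices $\{\varphi_k=\tau_k\ \forall k\}$ and $\{\varphi_k=0\ \forall k\}$, on which the objective coincides with the JOTPA and AB objectives, respectively.

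Because all three schemes maximize the same end-to-end functional over nested feasible sets, maximizing over the larger HBCT region can only raise or preserve the optimum; hence $R^{\mathrm{HBCT},*}\geq R^{\mathrm{JOTPA},*}$ and $R^{\mathrm{HBCT},*}\geq R^{\mathrm{AB},*}$. Since this inclusion holds for every fixed value of $P_t$, the inequality is uniform in the PT power, which is exactly what is meant by ``for different powers of the PT.''

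To exhibit the improvement concretely, I would use the closed form from Lemma \ref{Lem:time allocate}, writing the delivery as $R=T/D$ with
\begin{equation*}
    D=\frac{(1-c_1)p_1}{\zeta P_t h_1 X_1}+\sum_{k=1}^{K}\frac{1}{X_k},\qquad X_k=B_k^b+(1-c_k)(\Omega_k-B_k^b).
\end{equation*}
The key structural fact is that $D$ separates additively across hops, so the mode variables $c_k$ may be optimized one hop at a time. For $k\geq 2$ the contribution $1/X_k$ is monotone in $c_k$, forcing $c_k^*\in\{0,1\}$ and $X_k^*=\max\{B_k^b,\Omega_k^*\}$, so that hop's denominator term is no larger than under either fixed-mode commitment. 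This per-hop dominance is the structural mechanism behind the global inequality above, and it yields the explicit selection rule (\ref{c_i}).

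The step I expect to be the main obstacle is the source hop $k=1$, whose contribution $d_1(c_1)=\frac{1}{X_1}\big(1+\frac{(1-c_1)p_1}{\zeta P_t h_1}\big)$ couples the mode choice to the harvesting overhead $\tau_0$: operating the source conventionally ``spends'' time charging the super-capacitor, so the naive rule of picking the larger raw rate is no longer correct. The clean way through is to treat $d_1$ as a linear-fractional function of $1-c_1$ on $[0,1]$, verify it has no pole in that interval, conclude it is monotone and hence minimized at an endpoint, and check that its crossover reproduces the shifted threshold $\Omega_1^*>B_1^b\big(1+p_1^*/(\zeta P_t h_1)\big)$ appearing in (\ref{c_i}). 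Lemma \ref{lemma: SU_1 HrvstdEnrgy}, which pins down the source's energy consumption, is what makes $d_1$ well defined and lets the per-hop minimization close.
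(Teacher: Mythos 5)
Your proposal is correct and is essentially the paper's own argument: the paper likewise identifies JOTPA and AB as the restrictions of HBCT obtained by fixing all $c_k=0$ and all $c_k=1$, respectively, and concludes $R_H^*\geq \hat{R_J}$ and $R_H^*\geq R_A$ from optimality over the larger feasible set (the paper merely phrases this as a contradiction rather than as direct feasible-set inclusion). Your additional per-hop analysis of the $c_k$ thresholds, including the shifted threshold at the source hop, is sound but belongs to the proof of Lemma \ref{Lem:time allocate} rather than being needed here.
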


\begin{proof}
    See Appendix \ref{provHBCTbtr}.
\end{proof}

\section{Numerical Results} \label{SimulationResults}

In this section, some related studies in \cite{xu2017end, hoang2017ambient, lyu2018throughput, ju2014throughput, lee2015cognitive, ramezani2017throughput, lyu2017wireless} are considered and discussed. 

In \cite{hoang2017ambient}, the authors studied a pair of SUs as the secondary transmitter (ST) and secondary receiver (SR), working in a hybrid method that incorporates HTT and AB algorithms to improve the performance of the system. However, this work considers only a pair of users, not a multi-hop, as studied in this paper. In \cite{lyu2018throughput}, a CWPCN took into consideration that enjoys a hybrid scheme working with either HTT or AB algorithms. However, the scenario investigated in \cite{lyu2018throughput} is not a multi-hop one, so it cannot compare with our work. The scenarios investigated in \cite{ju2014throughput, lee2015cognitive, ramezani2017throughput, lyu2017wireless}, are also a WPCN, therefore as \cite{lyu2018throughput}, these papers cannot compare with a multi-hop scenario investigated in this paper. Besides, the works in \cite{ju2014throughput, ramezani2017throughput, lyu2017wireless} have not been studied in the context of cognitive radio.

Different from those in \cite{hoang2017ambient, lyu2018throughput, ju2014throughput, lee2015cognitive, ramezani2017throughput, lyu2017wireless}, the work investigated in \cite{xu2017end} is the most relevant work to our study.
\cite{xu2017end} studied a multi-hop energy harvesting cognitive radio network that enjoys only conventional communication. They proposed an algorithm to maximize end-to-end throughput with joint time and power allocations optimization, named JOTPA. As investigated in \cite{xu2017end}, we study a multi-hop network and propose the HBCT algorithm to maximize end-to-end bit delivery through time and power allocation optimization. The difference between JOTPA and HBCT algorithms is that JOTPA is an algorithm that works only in the conventional manner; however, HBCT is a hybrid algorithm that works with both conventional and backscatter manners.

\begin{algorithm}[t]
\label{Alg:HBCT}
\KwIn{K, $P_t$, $I_p$, T, $\zeta$, $\sigma^2$, $g_k$, $h_k$, $f_k$, $B_k^b$ \quad for $k=1,...,K$;}
\KwOut{$\bm{p^*}$,$\bm{\tau^*}$, $\bm{c^*}$;}
Initialization $\bm{\lambda} \geq 0$\;
     Find $\bm{A}$, $\bm{p}$, $\bm{\Omega}$, $\bm{c}$, $\bm{X}$, $\tau_0$, $\bm{\tau}$ by (\ref{final log equation}), (\ref{HTT power allocation}), (\ref{Omega}), (\ref{c_i}), (\ref{X}), (\ref{HTT tau_0}), (\ref{HTT tau_i}), respectively\;
     Optimize $\bm{\lambda}$ using \textit{fminunc} tool in MATLAB to find the maximum value of (\ref{TrnsmitBitHyBaC})\;
    \caption{HBCT algorithm for BackCom-EH-CRN}
\end{algorithm}

\begin{table}
\caption{Backscatter bit rate for different deployments}
\label{table:BkSctr BitRate}
\centering
    \begin{tabular}{ | m{6em} | m{14em} | c | }
        \hline
        \centering \textbf{Deployment} & \centering \textbf{Distance to the next hop (m)} & \textbf{Bit rate (Mbps)} \\ [0.5ex] 
        \hline
        \centering 1-hop & \centering 10 & 0.01 \\ 
        \hline
        \centering 2-hops & \centering 5 & 1.2 \\
        \hline
        \centering 3-hops & \centering 3.33 & 3.8 \\
        \hline
        \centering 4-hops & \centering 2.5 &  4.5 \\
        \hline
        \centering 5-hops & \centering 2 & 5 \\ [1ex] 
        \hline
    \end{tabular}
\end{table}

In the following, we present numerical results and compare the proposed algorithm with JOTPA suggested in \cite{xu2017end} and AB algorithm. 
The pseudo-code of the proposed HBCT algorithm is summarized in Algorithm \ref{Alg:HBCT}. It is worth mentioning that $\textbf{A} = [A_1, \dots, A_K]$. As proved in Proposition \ref{End2End TransmitBits eqlty}, to maximize end-to-end bit delivery, the number of transmission bits for all users should be equal. The problem is solved by considering Proposition \ref{End2End TransmitBits eqlty}; hence, substituting optimal solutions into (\ref{TrnsmitBitHyBaC}) gives us maximum end-to-end bit delivery for a given $\bm{\lambda}$. To find the optimal solution for $\bm{\lambda}$, we adopt the \textit{fminunc} tool in MATLAB.
We assume that the length of each time block is $\text{T} = 1$.
The channel power gains are calculated as $Y_k = \vert \beta_k^Y \vert^2(\frac{d_k^Y}{d_0})^{-\alpha}(Y=h, g, f)$. We assume fading channels have independent Rayleigh fading; hence, $\vert \beta_k^Y \vert ^ 2$ is an independent exponential random variable. $d_k^Y$ is the link distance, $d_0$ is the reference distance, and $\alpha$ is the path-loss exponent. Similar to those in \cite{xu2017end, ju2014throughput, ju2014optimal, yang2016outage}, the value of the path-loss exponent is considered $\alpha = 2$, which is for free space environments. Herein, we assume that the reference distance is $d_0 = 1$. The number of hops is considered $K=3$ unless mentioned else. We assume that energy harvesting efficiency in each $SU_k$ is $\zeta = 0.8$ \cite{xu2017end, jafari2019sum}. The noise power in received signals at each $SU_k$ is assumed to be $\sigma^2 = 1$. Therefore, the powers of all devices are normalized to $\sigma^2$. The power of signals emitted by the PT and the tolerable interference in PR are $P_t=40 dB$ and $I_p = 0 dB$, respectively. The location of PT and PR is (-8,10) and (-2,10), respectively. The location of the source node, i.e., $SU_1$, and the destination node, i.e., the $SU_K$, is assumed to be $(-10,0)$ and $(0,0)$, respectively. We assume that the intermediate nodes are located in a straight line from $SU_1$ to the $SU_K$. The bitrate of backscatter transmission is summarized in Table \ref{table:BkSctr BitRate}. These values are obtained from a practical experiment presented in \cite{bharadia2015backfi}.

\begin{figure}
    \centering
    \includegraphics[scale = 0.45]{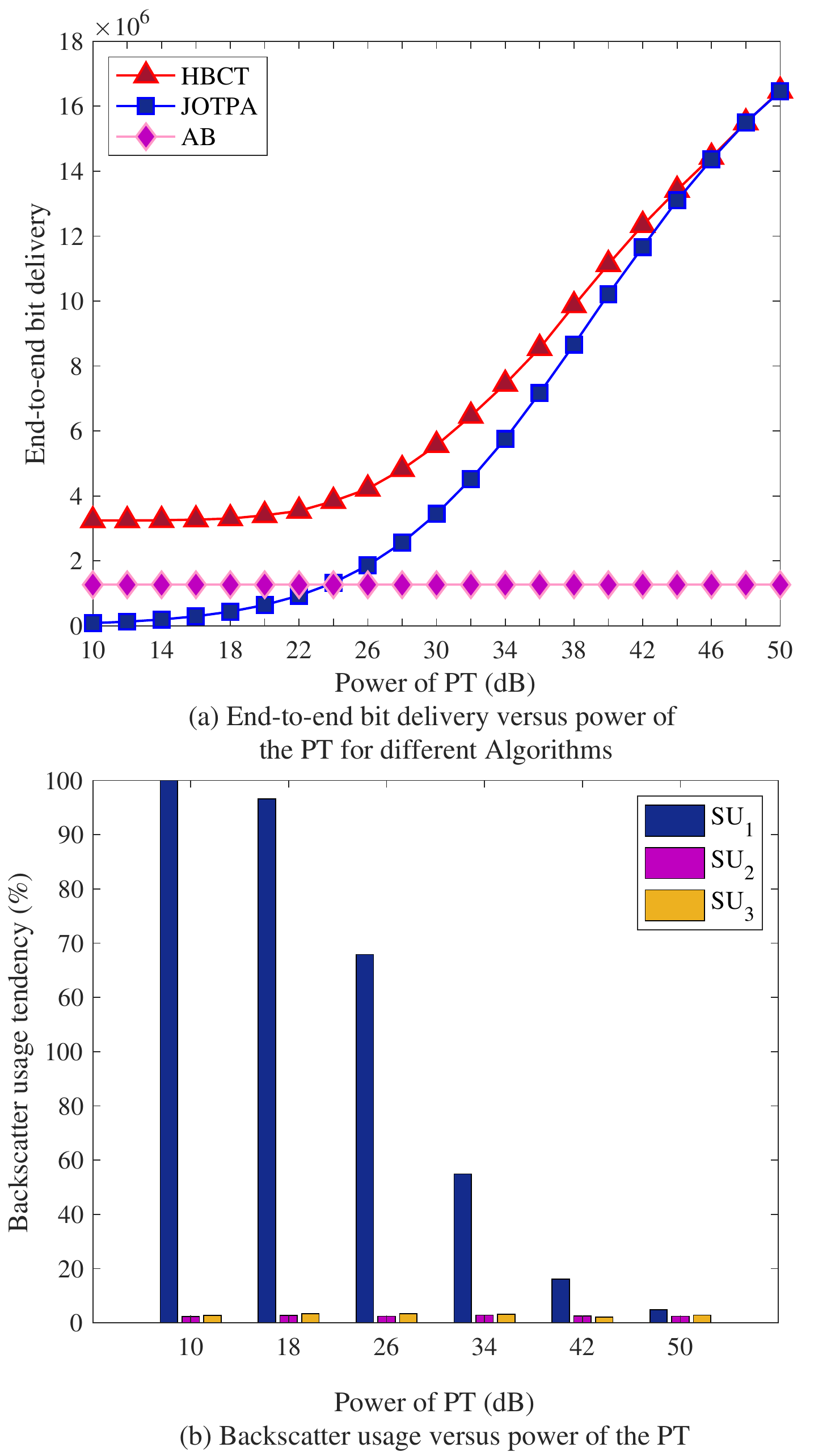}
    \caption{End-to-end bit delivery and Backscatter usage versus Transmission power of the PT}
    \label{fig:VrblPwr}
\end{figure}

The performance of the network versus different amounts of the power of PT is illustrated in Fig. \ref{fig:VrblPwr}. According to Fig. \ref{fig:VrblPwr} (a), the proposed HBCT algorithm always performs better, or at least the same as the JOTPA and AB algorithms in terms of end-to-end bit delivery, as proved in Lemma \ref{HBCT-better}. As shown in Fig. \ref{fig:VrblPwr} (a), in the low power of the PT, i.e., $P_t=10dB$, the gap between end-to-end bit delivery in the HBCT and JOTPA algorithms is maximum. As the power of PT increases, the gap decreases. Ultimately, the gap converges to zero when the $P_t$ is large, i.e., $P_t > 46 dB$. It means that, for high powers of the PT, the end-to-end bit delivery in both HBCT and JOTPA algorithms are identical. In comparison with JOTPA and AB algorithms, HBCT performs better. In fact, in JOTPA and AB, only conventional and backscatter communication are used, respectively. Nevertheless, HBCT uses the benefits of both backscatter and conventional methods. Hence, each SU  selects the conventional or backscatter method depending on which of them performs better. For this reason, HBCT always has a more significant bit delivery. 
When we compare JOTPA and AB, we see that in the low powers of PT, i.e., $P_t < 24 dB$, the JOTPA algorithm performs worse than the AB algorithm in terms of end-to-end bit delivery. This happens because JOTPA harvests lower energy in the low powers of PT. Therefore, it cannot deliver bits as AB can deliver. However, when PT power increases, the end-to-end bit delivery of JOTPA increases and becomes more than the end-to-end bit delivery of the AB algorithm.

In the following, we interpret the behavior of the proposed HBCT algorithm more clearly. Thus, we present the backscatter usage tendency of SUs in Fig. \ref{fig:VrblPwr} (b). The backscatter usage tendency is the average backscatter usage by each $SU_k$. According to Fig. \ref{fig:VrblPwr} (b), with low PT power, i.e., $P_t = 10 dB$, $SU_1$ transmits data in the backscatter way with a tendency of 100$\%$ on average. However, the $SU_2$ and $SU_3$ transmit data in the backscatter manner with a tendency near 2$\%$ on average. Instead, $SU_2$ and $SU_3$ almost use the conventional method. Because the $SU_1$ can transmit data in the backscatter method, the end-to-end bit delivery in the HBCT algorithm is kept much higher than the JOTPA algorithm in low powers of the PT, as depicted in Fig. \ref{fig:VrblPwr} (a). Note that, as mentioned before, the end-to-end bit delivery of the JOTPA algorithm depends on the amount of harvested energy. The amount of harvested energy in the low power of the PT is less to support conventional transmission, especially in the $SU_1$. For this reason, in the low power of PT, $SU_1$ in JOTPA cannot deliver much data compared to $SU_1$ in HBCT; therefore, the total bit delivery of JOTPA is low compared to HBCT and even AB algorithms. To explain more, the amount of harvested energy depends on the power of PT and the amount of harvesting time, according to equation (\ref{eq:HarvestedEnergy}). 
In the JOTPA algorithm, the allocated time to the energy harvesting phase is very high for low PT powers, as depicted in Fig. \ref{fig:Time_HBCT_JOTPA} (b). Hence, the allocated time for data transmission to $SU_1$ is very low. Therefore $SU_1$ cannot produce much data. This event results in low end-to-end bit delivery. However, by exploiting the backscatter method by the $SU_1$, the HBCT algorithm keeps its performance much better than the JOTPA due to the improved data transmission by $SU_1$. As the power of PT increases, the tendency for exploiting the backscatter method by $SU_1$ decreases. This is because the amount of harvested energy by $SU_1$ increases according to equation (\ref{eq:HarvestedEnergy}). Ultimately, at high PT power levels, i.e., $P_t = 50 dB$, all SUs transmit data in the conventional manner by a tendency of about 98 $\%$, as depicted in Fig. \ref{fig:VrblPwr} (b). Therefore, both HBCT and JOTPA use the conventional method, and as a result, the end-to-end bit delivery by HBCT and JOTPA algorithms converges to each other, as shown in Fig. \ref{fig:VrblPwr} (a). It is worth noting that in the low power of PT, if all SUs utilize the backscatter method, the gap between HBCT and AB algorithms converges to zero in Fig. \ref{fig:VrblPwr} (a).

\begin{figure}
    \centering
    \includegraphics[width = \columnwidth]{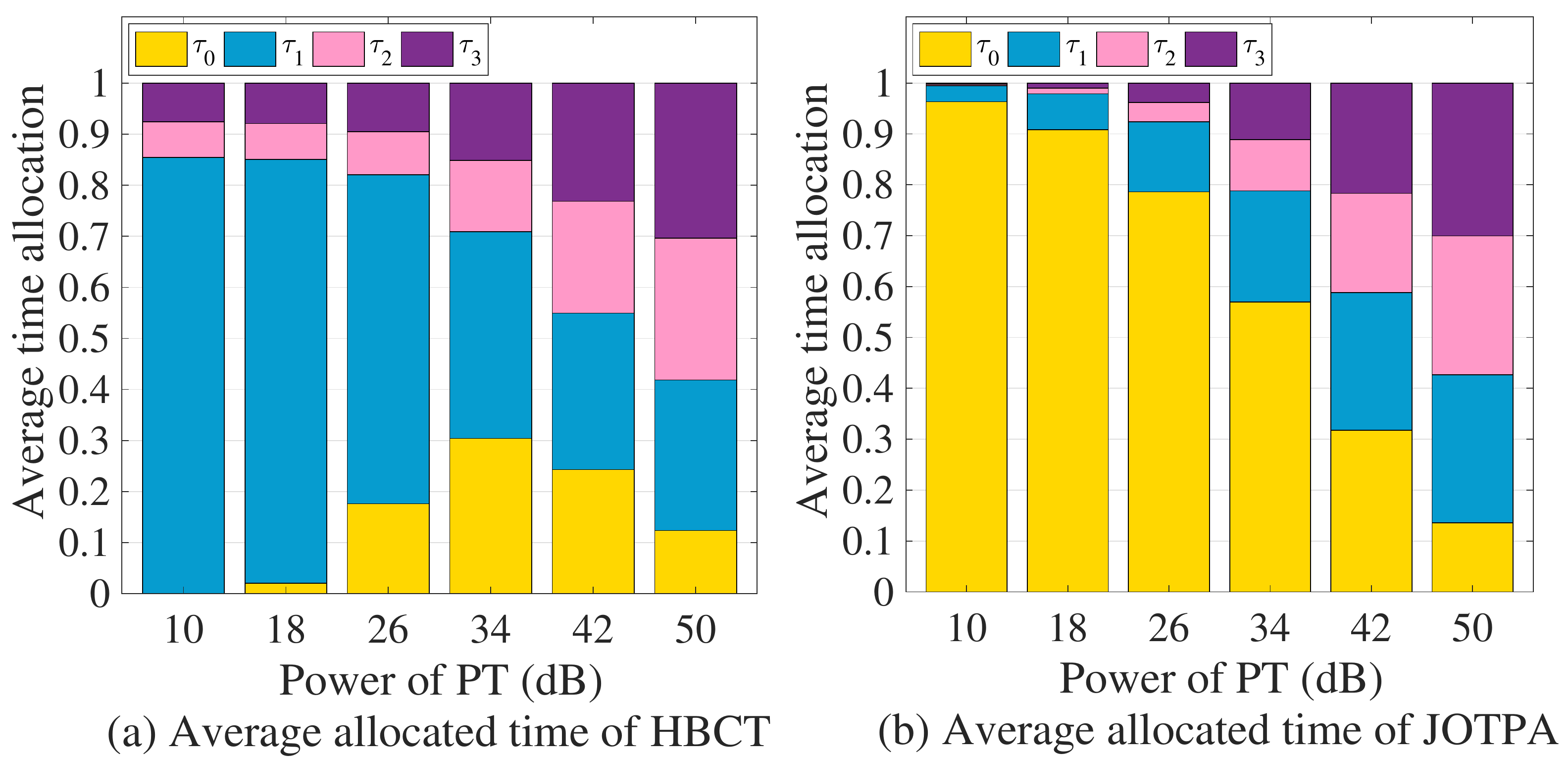}
    \caption{Average allocated time of HBCT and JOTPA algorithms versus power of PT}
    \label{fig:Time_HBCT_JOTPA}
\end{figure}

Fig. \ref{fig:Time_HBCT_JOTPA} (a) and \ref{fig:Time_HBCT_JOTPA} (b) present the average time allocated to each $SU_k$ in HBCT and JOTPA for different amounts of PT power, respectively.
As mentioned before, the amount of harvested energy by each $SU_k$ depends on the power of PT and the amount of harvesting time. According to equation (\ref{eq:HarvestedEnergy}), the amount of harvesting time for $SU_1$ is $\tau_0$. Since $SU_1$ has the least harvesting time among all SUs, it suffers from less harvested energy. Hence, it prefers to utilize the backscatter method more than the other SUs. This event has been depicted in Fig. \ref{fig:VrblPwr} (b), in which $SU_1$ has a greater tendency to transmit data in the backscatter manner, especially at the low powers of PT. According to Fig. \ref{fig:Time_HBCT_JOTPA} (a), in the HBCT algorithm, $SU_2$ and $SU_3$ have a significant amount of time to harvest energy. The amount of time for harvesting energy by $SU_2$ and $SU_3$ are $\tau_0 + \tau_1$ and $\tau_0 + \tau_1 + \tau_2$, respectively. Such amounts of times are much greater than $\tau_0$ in the low powers of PT. Therefore, $SU_2$ and $SU_3$ have a slight tendency to transmit data in the backscatter manner on average due to the availability of enough harvested energy.
Note that when $P_t = 10 dB$, the amount of time allocated to the harvesting phase, i.e., $\tau_0$, converges to zero. This is because $SU_1$ has a tendency of 100$\%$ on average to utilize the backscatter method. Therefore, for $P_t = 10 dB$, we always have $c_1 = 1$ (see equation (\ref{c_i})) . Hence, according to equation (\ref{HTT tau_0}), the amount of allocated time to the harvesting phase is always zero, i.e., $\tau_0 = 0$. As the power of PT increases, the tendency of $SU_1$ to exploit the conventional method increases, as depicted in Fig. \ref{fig:VrblPwr} (b). Therefore, the harvesting time increases because the conventional method needs the harvesting phase. The interesting point is that the amount of $\tau_0$ reaches its peak in some value of PT power which is $P_t = 34 dB$ in our scenario. For $P_t > 34 dB$, the amount of $\tau_0$ is decreased. This is because the $P_t$ is high enough; therefore, the system prefers to spend less time in the harvesting phase. Instead, the amount of time allocated to each $SU_k$ for transmission data increases. The higher allocated time for transmission results in higher transmitted bits.

\begin{figure}
    \centering
    \includegraphics[scale=0.4]{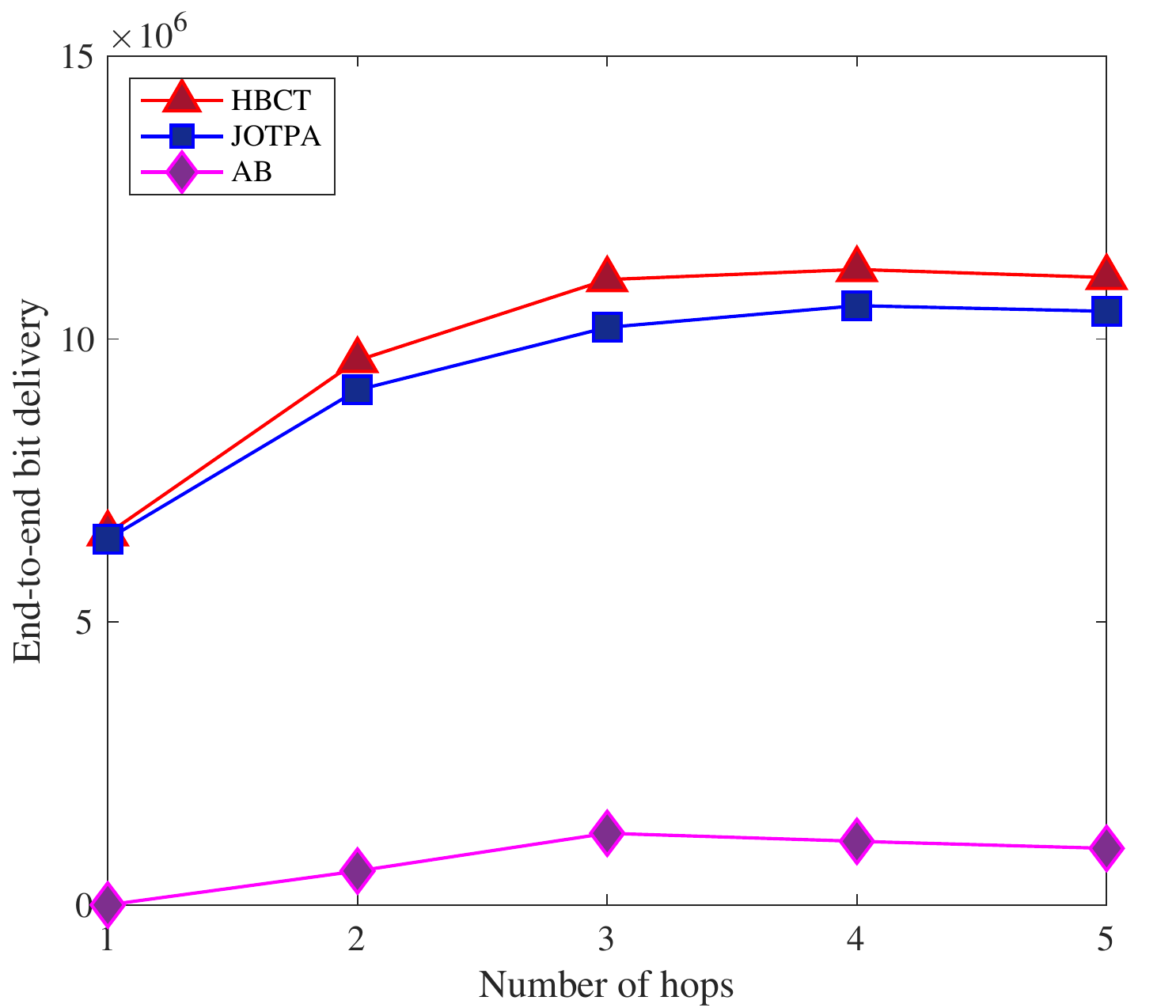}
    \caption{End-to-end bit delivery versus number of hops in different algorithms}
    \label{fig:VrblHop}
\end{figure}

In order to elaborate on the behavior of SUs in JOTPA, we refer to Fig. \ref{fig:Time_HBCT_JOTPA} (b). As depicted in Fig. \ref{fig:Time_HBCT_JOTPA} (b), in comparison with Fig. \ref{fig:Time_HBCT_JOTPA} (a), much amount of time in a frame is allocated to the energy harvesting phase in the JOTPA algorithm, especially in the low powers of PT.  Since the amount of harvesting time, i.e., $\tau_0$, in the low powers of PT is relatively high, the SUs have little time to transmit data. This phenomenon can also be shown in Fig. \ref{fig:VrblPwr} (a), in which JOTPA presents low end-to-end bit delivery for low powers of PT. Nonetheless, the HBCT algorithm does not suffer from such a phenomenon. This can be seen in Fig. \ref{fig:Time_HBCT_JOTPA} (a), in which the amount of time allocated to the SUs for transmitting data is significantly increased in the HBCT algorithm, especially in the low powers of PT, thanks to the backscatter communication method.

Fig. \ref{fig:VrblHop} presents end-to-end bit delivery versus the number of hops in different algorithms.
As shown in Fig. \ref{fig:VrblHop}, the proposed HBCT algorithm performs better than JOTPA and AB algorithms. As the number of hops increases, the end-to-end bit delivery increases due to the better data channel gains between two consecutive hops. However, in this scenario, the multi-hop network reaches its peak at $K=4$. For $K > 4$, the end-to-end bit delivery is decreased. This phenomenon occurs because as the number of hops increases, the time frame has to divide into more time-slots for more SUs. Therefore, each $SU_k$ has less amount of time to transmit data. The less amount of time for transmission results in less transmitted bits. In other words, for $K > 4$, the amount of transmitted time is dominant than the better data channel gains.  If the number of hops is very large, the amount of allocated time to each $SU_k$ is very small. Hence, according to equations (\ref{HBC BitDlvry}), the transmitted bits converge to zero. It is important to note that the optimal amount of $K$ can be obtained by searching among the integer numbers since it cannot be calculated by a closed-form expression \cite{xu2017end}.

\begin{figure}
    \centering
    \includegraphics[scale = 0.70]{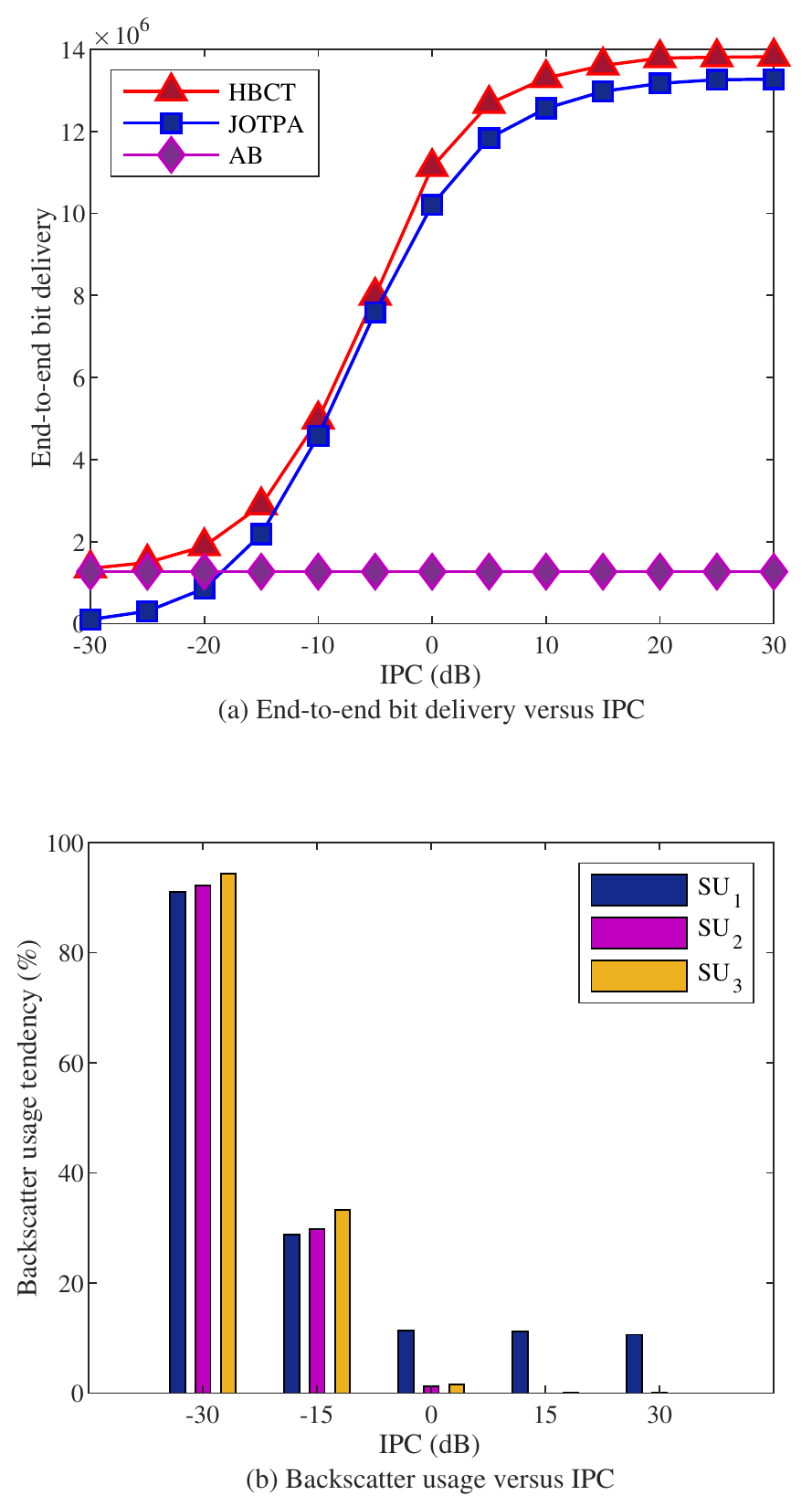}
    \caption{The performance of the network versus IPC }
    \label{fig:IPCVrbl}
\end{figure}

Fig. \ref{fig:IPCVrbl} presents the performance of the network versus IPC. To explain Fig. \ref{fig:IPCVrbl} (a), we use the results depicted in Fig. \ref{fig:IPCVrbl} (b).
As can be seen in Fig. \ref{fig:IPCVrbl} (a), the HBCT algorithm performs better than JOTPA and AB algorithms in terms of end-to-end bit delivery versus different amounts of IPC. According to Fig. \ref{fig:IPCVrbl} (a), for the low amount of IPC, i.e., $I_p = -30 dB$, the HBCT performs as same as the AB algorithm. This phenomenon occurs because, in $I_p = -30 dB$, all SUs tend to utilize backscatter, as depicted in Fig. \ref{fig:IPCVrbl} (b). In this condition, the IPC is stringent; therefore, SUs cannot utilize the conventional method. However, the JOTPA algorithm does not enjoy from backscatter method; hence, it cannot transmit data. 
For low amounts of IPC, i.e., $I_p \leq -15 dB$, those $SU_k$s which are closer to PR have more tendency to utilize the backscatter method, as shown in Fig. \ref{fig:IPCVrbl} (b). This event occurs because those $SU_k$s which are closer to the PR have better interference channel gain to the PR; therefore, they suffer from the IPC more than the other SUs. The backscatter usage tendency decreases as the $I_p$ increases due to the lower IPC. For $I_p > 0 dB$, the $SU_1$ has a constant tendency of exploiting backscatter manner of about 10$\%$ on average, but $SU_2$ and $SU_3$ do not tend to transmit by backscatter method, as depicted in Fig. \ref{fig:IPCVrbl} (b). 
In fact, in our scenario, for $I_p > 0$, the allocated power of each $SU_k$ which causes interference at PR is lower than the tolerable threshold. Besides, in our scenario, when $I_p > 0$, $\frac{e_k^*}{\varphi_k^*} < \frac{I_p}{f_k}$. Hence, $SU_k$ selects $p_k^*$ as $\frac{e_k^*}{\varphi_k^*}$ according to equation (\ref{HTT power allocation}). In this situation, $I_p$ does not affect the $p_k^*$; therefore, it does not affect the conventional rate. This means that the more increase in $I_p$ does not affect the tendency to utilize the backscatter method by SUs. It is important to point out that in other scenarios, such an explained situation will happen for some other intervals of $I_p$.
This constant tendency to transmit data in the backscatter manner by $SU_1$ results from critical harvesting time. However, $SU_2$ and $SU_3$ have enough time to harvest energy; therefore, they do not intend to use the backscatter method.

\section{Conclusion} \label{Conclusion}
In this paper, we investigated a multi-hop network and proposed a novel algorithm named HBCT, in which the users in the network exploit both conventional and backscatter communication. Our proposed algorithm utilizes the primary spectrum to transmit data in a conventional cognitive radio manner and exploits the primary signals to harvest energy. The algorithm also uses primary signals as incident signals to backscatter data. More specifically, the proposed HBCT algorithm combines conventional and backscatter methods. In each circumstance, the algorithm uses one of such methods depending on which performs better. The source node should deliver data to the destination in a multi-hop manner. The numerical results revealed that the HBCT algorithm performs better than JOTPA and AB under various conditions. With HBCT, harvesting time decreases significantly; therefore, the SUs spend more time on data transmission.

\appendices

\section{}
\label{appendix:Concavity OptProb}
First, we define $\Phi_k$ as
\begin{equation}
    \Phi_k = \log_2(1 + e_k \gamma_k).
\end{equation}
We know that $\Phi_k$ is a concave function of $\bm{e}$. Hence,  $\varphi_k W \log_2(1 + \frac{e_k}{\varphi_k} \gamma_k) $ is the perspective of $\Phi_k$\cite{boyd2004convex}. As the perspective operation preserves convexity, it is a concave function of $e_k$ and $\varphi_k$. Then, we can observe that $R_{k}(\tau_k,\varphi_k,e_k)$ is a concave function of $\tau_k, \varphi_k$ and $e_k$ since the summation of a linear function and a concave function is a concave function. As $R(\bm{\tau}, \bm{\varphi}, \bm{e}) = \min \limits_{k}{R_{k}(\tau_k,\varphi_k,e_k)}$ is the pointwise minimum of K concave functions, $R(\bm{\tau}, \bm{\varphi}, \bm{e})$ is a jointly concave function of $\bm{\tau}$, $\bm{\varphi}$, and $\bm{e}$.

\section{}
\label{appendix:TimeConst eqlty}
From (\ref{TotTimeConst}), we can write
\begin{equation}\label{tau_j bsis tau_tilde}
    \tau_k = T - \sum_{\substack{{i=0}\\
                                 {i \neq k}}}^{K}\tau_i - \tau^{'},
\end{equation}
where $\tau^{'}$ is the unused amount of time in each time block. Note that $\varphi_k = (1 - c_k)\tau_k$, $0 \leq c_k \leq 1$, which shows $\varphi_k$ and $\tau_k$ are dependent by parameter $c_k$. $c_k$ shows the portion of time in time slot $\tau_k$ which $SU_k$ spends in BackCom mode. 
By substituting this formula into equation (\ref{R_k(e)}), we have the following equation.
\begin{equation}
    \begin{split}
        R_k(\tau_k, c_k, e_k) = 
        & (1-c_k) \tau_k W \log_2(1 + \frac{e_k}{(1 - c_k) \tau_k} \gamma_k)\\
        & + c_k \tau_k B_k^b.
    \end{split}
\end{equation}
Later on in Lemma \ref{Lem:time allocate} we will show that $c_k$ can be only $1$ or $0$. In both of these two cases, it is straightforward to show that $\frac{\partial{ R_{k}(\tau_k,e_k)}}{\partial{\tau_k}} > 0$.
Hence, $R_{k}$ is monotonically increasing with respect to $\tau_k$; thus, it is maximum when $\tau_k$ is maximum. Therefore, from (\ref{tau_j bsis tau_tilde}), $\tau_k$ is maximum when  $\tau^{'} = 0$. subsequently
\begin{equation}\label{tau equlty}
    \sum_{i=0}^K \tau_i = T,
\end{equation}
and the proof is completed.

\section{}
\label{appendix: power allocation}
Problem (\ref{DualRrob}) is a convex optimization problem, and it is straightforward to show that it satisfies Slater's condition. Therefore, Karush-Kuhn-Tucker (KKT) conditions must be established at the optimal point. That is
\begin{align}\label{KKT}
    \frac{\partial \mathcal{L}^{'}(\bm{\tau}, \bm{\varphi}, \bm{e}, \bm{\mu}, \bm{\xi}, \nu)}{\partial y} = 0,
\end{align} 
where $y = \tau_0, \tau_k, \varphi_k, e_k, k = 1, \dots, K$, to derive (\ref{KKT2}), (\ref{KKT3}), (\ref{KKT4}), (\ref{KKT5}), respectively.
\begin{align}
    & \zeta P_t \sum_{i=1}^K \mu_i h_i - \nu = 0,       \label{KKT2}\\
    & \lambda_k B_k^b + \zeta P_t \sum_{i=k+1}^{K} \mu_i h_i - \nu + \xi_k = 0,   \label{KKT3}\\
    & \lambda_k \frac{W}{\ln{2}} \bigg( \ln{(A_k)} - \frac{A_k - 1}{A_k} \bigg) - \lambda_k B_k^b - \xi_k = 0,     \label{KKT4}\\
    & \lambda_k \frac{W \gamma_k}{A_k\ln{2}} - \mu_k = 0,        \label{KKT5}
\end{align}
where as stated in the Lemma \ref{lemma:HTT OPA}, $A_k =1 +  \frac{e_k}{\varphi_k}$. In the above equations $A_k$, is unknown. Using equations (\ref{KKT2}) - (\ref{KKT5}), it is straightforward to show that $A_k$ can be found from equation (\ref{final log equation}).
Having found $A_k$, $\frac{e_k}{\varphi_k}$ can be calculated as $\frac{A_k-1}{\gamma_k}$. Next, using the value of $\frac{e_k}{\varphi_k}$ and inequality  (\ref{Interference Power Constraint}), $p_k^*$ can be found from (\ref{HTT power allocation}).

\section{}
\label{appendix: SU_1 hrvstEnrgyComplt}
Equation (\ref{enrgy consmption contraint}) states that the allocated energy to each $SU_k$ should not exceed the amount of harvested energy.
From (\ref{enrgy consmption contraint}), the harvesting time allocation $\tau_0$, when optimum value $e_1^*$ harvested by $SU_1$, must satisfy the following constraint:
\begin{equation}\label{tau_0 constraint}
    \tau_0 \geq \frac{e_1^*}{\zeta P_t h_1}.
\end{equation}
On the other hand, from (\ref{HBC BitDlvry}), it is obvious that $R_k$ is a descending function of $\tau_0$; therefore, $R_k$ is maximum when $\tau_0$ is minimum. 
Thus, from (\ref{tau_0 constraint}), we can obtain the optimal value of $\tau_0$ as
\begin{equation}\label{tau_0 e_1 equlty}
    \tau_0^* = \frac{e_1^*}{\zeta P_t h_1}.
\end{equation}
The equation (\ref{tau_0 e_1 equlty}) can be state as $e_1^* = \zeta P_t h_1 \tau_0^*$, which means that the allocated energy to $SU_1$ should be equal to the harvested energy by this node, 
and the proof is completed.

\section{}
\label{appendix: Solv HTT Phase}
As mentioned next after Lemma \ref{lemma:HTT OPA}, we have $p_1^* = \frac{e_1^*}{\varphi_1^*}$. Further to that, $\varphi_1^* = (1 - c_1^*)\tau_1^*$. Hence, we can write $e_1^* = (1 - c_1^*)p_1^*\tau_1^*$. By substituting $e_1^*$ in equation (\ref{tau_0 e_1 equlty}), we can calculate $\tau_0^*$ as (\ref{HTT tau_0}). 

Now, from Proposition \ref{End2End TransmitBits eqlty}, the end-to-end bit delivery is maximum if $R_1(\tau_1,\varphi_1,e_1) = R_2(\tau_2,\varphi_2,e_2) = ... = R_K(\tau_K,\varphi_K,e_K)$.
On the other hand, from the equation (\ref{HBC BitDlvry}), $R_k = \tau_k X_k$. Hence, $\tau_1 X_1 = \tau_k X_k, k = 2, \dots, K$. The equation is also correct for the optimal values that is $\tau_1^* X_1^* = \tau_k^* X_k^*$. Thus, $\tau_1^*$ can be expressed as
\begin{equation} \label{tau_1-tau_i}
    \tau_1^* = \frac{X_k^*}{X_1^*} \tau_k.
\end{equation}
With substitute (\ref{tau_1-tau_i}) into (\ref{HTT tau_0}), we can derive
\begin{equation}\label{EX1}
    \tau_0^* = \frac{(1-c_1^*)p_1^* X_k^*}{\zeta P_t h_1 X_1^*} \tau_k^*.
\end{equation}
Now, with substituting (\ref{EX1}) into the optimal form of equation (\ref{HBC BitDlvry}), we can derive
\begin{equation}\label{EX2}
    \tau_k^* X_k^* = \frac{T - \frac{(1-c_1^*)p_1^* X_k^*}{\zeta P_t h_1 X_1^*}\tau_k^*}{\sum_{i=1}^{K} \frac{1}{X_i^*}}.
\end{equation}
From equation (\ref{EX2}) and with some mathematical manipulation, we can find optimal time allocation as equation (\ref{HTT tau_i}). 

Now, we want to find optimal amounts of $c_k$. 
By arranging equation (\ref{HTT tau_i}), we can obtain end-to-end bit delivery as
\begin{equation}\label{EX3}
    R_k = X_k \tau_k^* = \frac{T}{\frac{(1 - c_1)p_1^*}{\zeta P_t h_1 X_1} + \sum_{i=1}^{K}\frac{1}{X_i}}.
\end{equation}
The derivative of $R_k$ with respect to $c_1$ and $c_k$ is obtained as follows. Note that $X_k$ is related to $c_k$ as in equation (\ref{X}).
\begin{equation}
    \frac{\partial R_k}{\partial c_1} = \frac{T \big( \frac{p_1^* B_1^b - \zeta P_t h_1 (\Omega_1^* - B_1^b)}{\zeta P_t h_
    1 X_1^2} \big)}{\big( \frac{(1-c_1)p_1^*}{\zeta P_h h_1 X_1} + \sum_{i=1}^K \frac{1}{X_i} \big)^2},
\end{equation}

\begin{equation}
    \frac{\partial R_k}{\partial c_i} = \frac{T\big( \frac{B_i^b - \Omega_i^*}{X_i^2} \big)}{\big( \frac{(1-c_1)p_1^*}{\zeta P_t h_1 X_1} + \sum_{l=1}^K \frac{1}{X_l} \big)^2},     i = 2,3,...,K.
\end{equation}
It is straightforward to see that, if $\Omega_1^* < B_1^b \big(1 + \frac{p_1^*}{\zeta P_t h_1} \big)$, $R_k$ is an ascending function of $c_1$, if $\Omega_1^* > B_1^b \big(1 + \frac{p_1^*}{\zeta P_t h_1} \big)$, $R_k$ is a descending function of $c_1$, if $\Omega_k^* < B_k^b$, $R_k$ is an ascending function of $c_k, k = 2,3,...,K$, and if $\Omega_k^* > B_k^b$, $R_k$ is an descending function of $c_k, k = 2,3,...,K$. Therefore, the maximum of $R_k$ is take place when $c_k$ is calculated as (\ref{c_i}).

\section{}
\label{provHBCTbtr}
The maximum end-to-end bit delivery for the HBCT algorithm is represented as
\begin{equation}\label{MaxEnd2End}
    R_H^* = X_k^* \tau_k^* = \frac{T}{\frac{(1 - c_1^*)p_1^*}{\zeta P_t h_1 X_1^*} + \sum_{i=1}^{K}\frac{1}{X_i^*}}.
\end{equation}
We define the vector of optimal variables for HBCT algorithm as $V^* = \{ \tau_0^*, \tau_k^*, c_k^*$ and $p_k^*$,$k=1,2,...,K \}$. 
By applying the same procedure to the JOTPA algorithm, we can find the maximum end-to-end bit delivery for the JOTPA algorithm as
\begin{equation}
    \hat{R_J} = \frac{T}{\frac{\hat{p_1}}{\zeta P_t h_1 \hat{\Omega_1}} + \sum_{i=1}^{K}\frac{1}{\hat{\Omega_i}}},
\end{equation}
where notation ' $\hat{}$ '  represents optimum values of $\tau_0$, $\tau_k$ and $p_k$, $k=1,2,...,K$ by which $R_J$ takes its maximum value. Here also we indicate optimum values by the set $\hat{V} = \{ \hat{\tau_0}, \hat{\tau_k}$, and $\hat{p_k}$, $k = 1, 2, ..., K \}$.

First, we assume that the following inequality is correct for all values of PT power. 
\begin{equation}\label{EX6}
    \hat{R_J} > R_H^*.
\end{equation}
The inequality (\ref{EX6}) states that JOTPA  performs better than the HBCT algorithm in terms of maximum end-to-end bit delivery, while $\hat{V}$ is the vector of optimal values used by JOTPA, and $V^*$ is the vector of optimal values used by HBCT. Now, we want to prove that this assumption cannot be correct. For this purpose, we replace the set of optimal values $V^*$ by $\hat{V}$ in equation (\ref{MaxEnd2End}), and set all $c_k = 0$. Therefore, we have the following equation.
\begin{equation}\label{EX7}
    \hat{R_H} = \frac{T}{\frac{\hat{p_1}}{\zeta P_t h_1 \hat{\Omega_1}} + \sum_{i=1}^{K}\frac{1}{\hat{\Omega_i}}} = \hat{R_J},
\end{equation}
where $\hat{R_H}$ is the HBCT end-to-end bit delivery when the set of $\hat{V}$ values are used. It is important to note that the $R_H^*$, which is obtained by the set of $V^*$ values, is the maximum value, and those end-to-end bit delivery of HBCT which are obtained by setting all $c_k=0$ are always less than or equal to such a maximum value, that is $R_H^* \geq \hat{R_H}$.
Hence, regarding (\ref{EX7}), $R_H^* \geq \hat{R_J}$, which is in contrast with our first assumption of $\hat{R_J} > R_H^*$. This means that the first assumption that has been made in (\ref{EX6}) is not true, and we can set HBCT in a way that performs better than JOTPA. All concludes that the proposed HBCT algorithm always performs better than or as same as the JOTPA algorithm.

Similarly, we assume that the AB algorithm performs better than HBCT.
\begin{equation}\label{EX11}
    R_A > R_H.
\end{equation}
It is straightforward to show that end-to-end bit delivery for the AB algorithm can be found from the following equation.
\begin{equation}\label{EX9}
    R_A = \frac{T}{\sum_{i=1}^{K}\frac{1}{B_i^b}}.
\end{equation}
In this case, we can set all SUs in the HBCT algorithm to work in backscatter mode by setting $c_k = 1$. In such a case, $X_k = B_k^b$; thus, from (\ref{MaxEnd2End}), we can write $R_H$ as
\begin{equation}\label{EX10}
    R_H = \frac{T}{\sum_{i=1}^{K}\frac{1}{B_i^b}}.
\end{equation}
It is worth noting that in this circumstance, according to (\ref{HTT tau_0}), we can obtain $\tau_0^* = 0$.ş
On the other hand, we have $R_H^* \geq R_H$ for the optimum $V^*$ values. Hence, from equations (\ref{EX9}) and (\ref{EX10}), we can conclude that $R_H^* \geq R_A$, which means that the assumption in (\ref{EX11}) cannot be true, and the HBCT algorithm always performs better than, or as same as AB.

\bibliographystyle{ieeetr}
\bibliography{ref}

\end{document}